\documentclass[%
 reprint,
nofootinbib,
amsmath,
amssymb,
aps,
showkeys,
onecolumn
]{revtex4-2}

\makeatletter
\renewcommand{\fnum@figure}{\textbf{Fig.} \thefigure}
\makeatother
\renewcommand*{\thefigure}{\textbf{\arabic{figure}}}

\usepackage[english]{babel}
\usepackage{amsthm}
\usepackage{pgfplots}
\pgfplotsset{compat=1.18}
\usepackage{amsmath}
\usepackage{graphicx}% Include figure files
\usepackage{dcolumn}% Align table columns on decimal point
\usepackage{bm}% bold math
\usepackage{txfonts}
\usepackage{mathtools}
\usepackage[colorlinks=true, allcolors=blue]{hyperref}
\usepackage{esint}
\usepackage{siunitx}
\usepackage{float}
\usepackage{framed}
\usepackage{comment}
\usepackage{wasysym}
\usepackage{threeparttable}
\usepackage{tabularx}
\usepackage{amsfonts} 
\usepackage{amssymb}   
\usepackage{placeins}

\usepackage{tabularx}
\usepackage{seqsplit}
\usepackage{booktabs}
\usepackage[ruled,vlined,linesnumbered]{algorithm2e} 
\usepackage{setspace}                                
\usepackage{xcolor}
\usepackage{tikz}       
\usetikzlibrary{trees, arrows.meta}
\usepackage{caption}

\newcommand\mycommfont[1]{\footnotesize\ttfamily\color{gray}}
\SetCommentSty{mycommfont}
\SetKwInput{Input}{Input}     
\SetKwInput{Output}{Output}   
\usepackage{orcidlink}
\usepackage{cancel}
\usepackage{makecell}
\usepackage{float}
\usepackage{mathrsfs}
\usepackage{multirow}
\usepackage[table]{xcolor}
\usepackage{siunitx}
\usepackage{tikz}
\usetikzlibrary{automata, positioning, arrows.meta}

\newtheorem{theorem}{Theorem}[section]
\newtheorem{corollary}{Corollary}[theorem]

\newtheorem{lemma}{Lemma}[section]
\newtheorem{definition}{Definition}[section]

\newtheorem{observation}[theorem]{Observation}

\newcommand\mydots{\hbox to 1em{.\hss.\hss.}}

\begin{document}
\selectlanguage{english}

\preprint{APS/123-QED}

\begin{comment}
\end{comment}

\author{Wawrzyniec Bieniawski\textsuperscript{*} \protect\orcidlink{0009-0000-9943-886X}}
%\affiliation{Łukaszyk Patent Attorneys, ul. Głowackiego 8, 40-052 Katowice, Poland}
\affiliation{Dom Zatorski Scientific Foundation,
%ul.~Odległa 10, 41-050 
Chorzów, Poland}
\thanks{\href{mailto:wbieniawski@patent.pl}{wbieniawski@patent.pl}}
%ORCID 0009-0000-9943-886X

\hypersetup{
%colorlinks=true,
%linkcolor=red,
%filecolor=magenta,      
%urlcolor=cyan,
%pdfpagemode=FullScreen,
%citecolor=green,
pdftitle={@ASI vs OTHER ALGOS@},
bookmarksopen=true,
}

\title{Assembly Theory and the Smallest Grammar Problem}

\begin{abstract}

Assembly theory (AT) quantifies complexity through the assembly index (ASI) --- a metric proven equivalent to the size of the smallest straight-line program (SLP). While this equivalence links AT to data compression—a field shaped by decades of research—the practical efficacy of specific algorithms as ASI approximators remains largely unexplored. This study provides an empirical evaluation of eight compression algorithms, spanning grammar-based and dictionary schemes (CAs) across a dataset of 408 strings --- comprising 368 synthetic strings (including max-complexity strings and strings with varying levels of symbol distribution balance, quantified by Shannon entropy) and 40 natural biological sequences (genomic and proteomic). We introduce Re-Pair T-NDR, a branch-and-bound tie-resolving variant of Re-Pair, providing a tighter upper bound on the ASI than the other CAs researched in this study. Increasing alphabet size extends the regime in which CAs closely track the ASI, delaying a "divergence point", where a CA deviates from the optimal assembly path. The results establish compression algorithms as practical tools for ASI estimation.

\end{abstract}

\keywords{
assembly theory;
assembly index;
information theory;
complexity measures;
smallest grammar problem;
straight line program;
approximation ratio;
information entropy;
grammar-based compression;
computational biology
}

\maketitle

\section{Introduction}\label{sec:Introduction}

The assembly index (ASI) measures the minimal number of assembly steps required to construct an object from elementary components, providing a quantifiable metric of constructive complexity~\cite{marshall_probabilistic_2017, walker_probabilistic_2019, meadows_planetary_2020, liu_exploring_2021, marshall_identifying_2021, marshall_formalising_2022, sharma_assembly_2023, jirasek_investigating_2024, lukaszyk_assembly_2024, raubitzek_autocatalytic_2024, flamm_assembly_2025}. Recent theoretical work has established the ASI as equivalent to the size of the smallest straight-line program (SLP) generating a target string~\cite{masierak_computational_2026}, thereby connecting Assembly Theory (AT) to the smallest grammar problem (SGP)~\cite{charikar_smallest_2005}. This equivalence is computationally significant: 
while the SGP is NP-hard~\cite{charikar_smallest_2005} and the ASI decision problem is NP-complete~\cite{masierak_computational_2026, lukaszyk_optimal_2026}, both are computable, unlike Kolmogorov complexity.

Despite this theoretical foundation, a systematic empirical evaluation of which state-of-the-art compressors provide reliable ASI estimates in practice remains absent from the literature. This gap leaves open the question of how faithfully existing algorithms approximate the ASI, particularly for real-world sequences where exact computation is infeasible. We address this through two questions:

\begin{enumerate}
\item Do state-of-the-art compressors provide sufficient heuristic efficacy for ASI estimation, and what are their error bounds?
\item How do alphabet size and symbol distribution balance (quantified by Shannon entropy~$H$) determine algorithmic accuracy, and can selection guidelines be established?
\end{enumerate}

We evaluate eight grammar-based and dictionary compression algorithms across 408 strings---368 synthetic (including max-complexity strings and entropy-stratified samples) and 40 biological sequences (proteomic and genomic). Our principal contributions are:

\begin{enumerate}
\item{Re-Pair T-NDR:} A branch-and-bound variant of Re-Pair that exhaustively resolves tie-breaking decisions, empirically dominating both deterministic variants on all tested strings (Observation~\ref{obs:tndr_empirical}).

\item{Alphabet-size effect:} We formalise the \emph{divergence point}~$N^*(A, |\Sigma|)$---the smallest length at which an algorithm can deviate from the optimal ASI---and demonstrate that larger alphabets delay this threshold.

\item{Rank-correlation artefacts:} Partial Spearman analysis reveals that high rank correlations on max-complexity sets are partly driven by length covariation. Genuine length-independent complexity discrimination is established on fixed-length sets ($N=30$), where Re-Pair T-NDR achieves $\rho \geq 0.916$ (binary), $\rho \geq 0.987$ (ternary), and $\rho \geq 0.987$ (quaternary).
\end{enumerate}

Throughout this paper, the terms \textit{heuristic} and \textit{algorithm} are used interchangeably. Regarding notation, we use $N$ to denote string length throughout this paper, although much of the cited literature on the SGP and addition chains traditionally employs $n$ (a convention we exceptionally retain in Section~\ref{sec:Theory and Background} and in Section~\ref{sec:Discussion} to maintain alignment with the original formulations).

The paper is organised as follows. Section~\ref{sec:Theory and Background} reviews grammar-based compression and formalises the ASI--SLP equivalence. Section~\ref{sec:Methods} describes the dataset, algorithms, and evaluation metrics. Section~\ref{sec:Results} presents empirical results across synthetic and biological data. Section~\ref{sec:Discussion} situates the findings within AT and responds to recent criticisms. Section~\ref{sec:Conclusions} summarises contributions.

\section{Theory \& Background}\label{sec:Theory and Background}

Grammar-based compression algorithms (GBCA) seek a context-free grammar (CFG) that generates exactly one target string while minimising the total number of rules and symbols. Unlike dictionary methods (LZ family), which exploit local redundancies through single-pass sliding windows, GBCA capture hierarchical, self-similar structure via recursive rewrite rules. Early practical algorithms include Sequitur (1997) and Re-Pair (1999)~\cite{nevill-manning_identifying_1997, larsson_offline_1999}; subsequent theoretical work established approximation ratios ranging from $O((n/\log n)^{2/3})$ for greedy heuristics to $O(\log n/\log\log n)$ for methods based on Yao's addition-chain framework~\cite{charikar_smallest_2005, yao_evaluation_1976}. Table~\ref{tab:approx_ratio} summarises these bounds and indicates which algorithms were selected for empirical evaluation in this study.

\subsection{The ASI--SLP Equivalence}\label{subsec:asi_slp}

An SLP is a CFG in which each nonterminal has exactly one production rule and the dependency graph is acyclic, ensuring that it generates a single, unique string~\cite{ganardi_compression_2021}. 
The size of an SLP is the number of its production rules, each representing a binary concatenation of two symbols (nonterminals or terminals). Under this convention, the SLP size equals the number of assembly steps, and the SGP asks for the minimum-size SLP generating a given string~$w$.

The connection between the ASI of a string and the size of a corresponding CFG was initiated in~\cite{abrahao_assembly_2024}. This was later refined into an exact equivalence: the ASI of a string equals the size of its smallest SLP~\cite{masierak_computational_2026}:
\begin{equation}\label{eq:asi_slp}
    \mathrm{ASI}(w) = \mathrm{SLP}(w).
\end{equation}

Equation~\eqref{eq:asi_slp} places the ASI decision problem in the NP-complete class~\cite{masierak_computational_2026, lukaszyk_optimal_2026}, and consequently establishes the ASI as the theoretical lower bound---the ``ideal compressor''---within the grammar-based compression framework.

\subsection{Re-Pair and the Tie-Breaking Problem}\label{subsec:repair_ties}

Re-Pair operates by iteratively replacing the most frequent digram with a nonterminal until no digram occurs more than once~\cite{larsson_offline_1999}. When multiple digrams share the maximum frequency, a tie-breaking rule must be applied. Two standard deterministic variants are:
\begin{itemize}
    \item {Re-Pair FIFO:} selects the first-encountered digram during left-to-right scanning;
    \item {Re-Pair FIXED:} selects the lexicographically smallest digram.
\end{itemize}
Neither variant is guaranteed to yield the smallest grammar: a suboptimal choice at one iteration can propagate through the hierarchy and compound over subsequent steps. This motivates an exhaustive search over all tied digrams at each iteration.

\subsection{Re-Pair T-NDR}\label{subsec:tndr_theorem}

Let $\mathcal{P}(w)$ denote the set of all Re-Pair execution paths on a string $w$, where each path corresponds to a sequence of tie-breaking choices. For any path $p \in \mathcal{P}(w)$, let $L_p(w)$ denote the grammar size produced by that path.

\begin{lemma}[Lower bound on grammar size]
\label{lem:lowerbound}
Let $w'$ be the string remaining after $r$ Re-Pair substitution rounds have been applied to $w$,
and let $N=|w'|$. Any straight-line program completing this partial derivation has total size at least
$r+\lceil \log_2 N\rceil$.
\end{lemma}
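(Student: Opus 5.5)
The plan is to split the total size of any completing SLP into two parts. The first part consists of the $r$ rules already fixed by the partial derivation. The second part consists of the rules still needed to derive $w'$ from its symbols, which we treat as terminals. We then bound the second part from below by a doubling argument.

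\textbf{Step 1: the $r$ committed rules.} A completion of the partial derivation is an SLP that contains the $r$ rules introduced by the Re-Pair substitution rounds. Each round introduces one fresh nonterminal $X_i \to ab$. Each such nonterminal occurs in $w'$ or inside a later rule, since Re-Pair only replaces digrams occurring at least twice. These $r$ rules are pairwise distinct, because each has a distinct left-hand side. So they contribute exactly $r$ to the size. We must also check that the remaining rules cannot reuse them to shorten the derivation of $w'$. This holds because those rules have right-hand sides over the alphabet of $w'$, which is simply an enlarged terminal alphabet.

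\textbf{Step 2: the remaining rules.} Consider the rules that build the string $w'$ of length $N$ over this enlarged alphabet. This is an SLP whose rules are binary concatenations. Let $\ell_k$ be the maximum length of a string derivable using at most $k$ such rules; the base case is $\ell_0 = 1$. Each new rule concatenates two previously available symbols, so $\ell_{k+1} \le 2\ell_k$, and hence $\ell_k \le 2^k$. To produce a string of length $N$ we need $2^k \ge N$, that is, at least $\lceil \log_2 N\rceil$ additional rules. Adding this to the $r$ committed rules gives total size at least $r+\lceil \log_2 N\rceil$.

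\textbf{Main obstacle.} The delicate point is the exact meaning of ``completing this partial derivation''. The count is additive only if the $r$ Re-Pair rules are genuinely part of the final SLP and are disjoint from the rules counted in Step 2. I would make this precise by defining a completion as an SLP whose rule set contains the $r$ substitution rules and whose start symbol derives $w'$ over the symbols of $w'$. The argument in Step 2 does not depend on the alphabet, so this definition makes the two counts disjoint.

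Two edge cases also need handling. When $N=1$, the bound reduces to $r$, which is trivially attained. The bound is stated for the completion of a fixed partial path, not for $\mathrm{SLP}(w)$ in general. This is precisely what a branch-and-bound pruning rule for Re-Pair T-NDR requires.
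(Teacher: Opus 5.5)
Your proposal is correct and follows essentially the same route as the paper: you count the $r$ committed Re-Pair rules separately, then bound the remaining concatenations by the doubling argument ($\ell_k \le 2^k$, which is the paper's ``maximum weight at most doubles per step''), giving at least $\lceil \log_2 N\rceil$ further rules. Your explicit definition of a completion, as an SLP that contains the $r$ substitution rules and derives $w'$ over its own symbols, makes precise an assumption the paper's proof uses only implicitly.
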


\begin{proof}
Consider any completion of the partial derivation into an SLP.
Let $c$ be the total number of additional binary concatenations required: the number of new rules plus the $(m-1)$ concatenations needed to assemble the $m$ symbols that remain once no digram repeats into a single root. Assign weight $1$ to each of the $N$ symbols of $w'$. Each binary concatenation creates a symbol whose weight is the sum of the weights of its two children, so the maximum weight among all symbols at most doubles in a single step. After $c$ binary concatenations the maximum weight is therefore at most $2^{c}$. Since the root symbol must derive the entire string $w'$, its weight must equal $N$. Hence $2^{c} \ge N$, i.e.\ $c\ge\lceil\log_{2}N\rceil$. Adding the $r$ rounds already spent yields total size at least $r+\lceil\log_{2}N\rceil$.
\end{proof}

\begin{theorem}[Correctness and Soundness of Re-Pair T-NDR]
\label{thm:tndr}
For any string $w \in \Sigma^+$:
\begin{enumerate}
\item[(a)] Re-Pair T-NDR enumerates all paths in $\mathcal{P}(w)$ that are not pruned by the branch-and-bound condition or short-circuited by a cache hit.
\item[(b)] If a partial path with current rule count $r$ and current string length $N$ satisfies $r+\lceil \log_2 N\rceil \ge G_{\text{best}}$, where $G_{\text{best}}$ is the best complete grammar size found so far, then no completion of this partial path can improve upon $G_{\text{best}}$.
\item[(c)] Re-Pair T-NDR returns $L_{\text{T-NDR}}(w) \ge \min_{p \in \mathcal{P}(w)} L_p(w)$; in particular, $\mathrm{ASI}(w) \le L_{\text{T-NDR}}(w)$.
\end{enumerate}
\end{theorem}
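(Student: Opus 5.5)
The proof takes the three parts in turn. Each reduces to a structural property of the search (exhaustive branching, sound pruning, sound caching) or to Lemma~\ref{lem:lowerbound}.

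For (a), I argue by induction on the depth of the search tree. Re-Pair T-NDR is a depth-first search whose nodes are pairs (current string $w'$, rule count $r$). At each node it computes the set of digrams of maximal frequency (with frequency at least $2$) and branches on every one of them. A path $p\in\mathcal{P}(w)$ is exactly a root-to-leaf sequence of such choices. The inductive claim is: if a node is reached and is neither pruned nor resolved from the cache, then every child corresponding to a tied digram is generated. Hence every path whose prefixes avoid both events is explored to its leaf and its size $L_p(w)$ is recorded. The only care needed is to state the branching rule precisely, with no digram omitted from the tie set and substitution performed deterministically for a fixed digram (non-overlapping, left to right). Then a path is determined by its digram choices, and distinct paths correspond to distinct branches.

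For (b), I apply Lemma~\ref{lem:lowerbound} directly. A partial path with $r$ rounds and current string of length $N$ has every completion of size at least $r+\lceil\log_2 N\rceil$. Under the hypothesis this is $\ge G_{\text{best}}$, so no completion has size strictly below $G_{\text{best}}$. This holds in particular for Re-Pair completions, which are SLP completions once the final $m-1$ root concatenations are counted, as in the paper's size convention. I would also note that $G_{\text{best}}$ only decreases during the search, so a prune remains valid later.

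For (c), the main point is that every returned value is the size $L_p(w)$ of some genuinely completed path $p$. Cache hits return values previously computed for an identical state $(w',r)$, or for $w'$ with the offset in $r$ added; since the remaining computation depends only on $w'$, the cached value is realised by a concatenation of real path segments. Therefore $L_{\text{T-NDR}}(w)=L_p(w)$ for some $p\in\mathcal{P}(w)$, which gives $L_{\text{T-NDR}}(w)\ge\min_{p}L_p(w)$.

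Each $L_p(w)$ is the size of an SLP generating $w$: the Re-Pair rules plus the binary concatenations assembling the final sequence. So $L_p(w)\ge \mathrm{SLP}(w)=\mathrm{ASI}(w)$ by~\eqref{eq:asi_slp}, and chaining the two inequalities gives $\mathrm{ASI}(w)\le L_{\text{T-NDR}}(w)$.

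I expect the main obstacle to be the cache step in (c) (and its interaction with (a)). One must specify what the cache key is and show that a cached subresult, possibly computed under a different $G_{\text{best}}$ and hence partly pruned, is still the size of an actual completion and not a bound or sentinel. My first attempt is to key on the residual string $w'$ alone and store the best completion cost actually achieved, with pruned subtrees returning $+\infty$ and never being cached as finite values. Realisability of every finite returned value then follows by induction.
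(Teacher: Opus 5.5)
Your proposal is correct and follows the paper's proof part by part: (a) from branching on the full tie set, (b) directly from Lemma~\ref{lem:lowerbound}, and (c) from the returned value being the size of an actually explored Re-Pair path, with the ASI inequality obtained by the same chain as in Corollary~\ref{cor:tndr_bound}. Your induction showing that cached values are realised by genuine completions makes explicit the paper's ``may overestimate but never underestimate'' step, which the paper only asserts. It carries over to the cache key \textit{canonical\_form}$(w')$ of Algorithm~\ref{alg:t_ndr_final_fixed} provided that normalisation is a symbol relabelling, since residual Re-Pair costs are invariant under relabelling.
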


\begin{proof}
\textbf{(a)} The search tree branches on all digrams sharing maximum frequency at each iteration; each branch corresponds to exactly one path in $\mathcal{P}(w)$. Paths are omitted only when pruned by the branch-and-bound condition or short-circuited by a cache hit.

\textbf{(b)} By Lemma~\ref{lem:lowerbound}, any completion of a partial path requires at least $r + \lceil \log_2 N \rceil$ rules total. If this lower bound meets or exceeds $G_{\text{best}}$, no completion can improve the incumbent.

\textbf{(c)} The tree is finite, since each substitution strictly reduces string length. By (a), all paths that are neither pruned nor short-circuited are explored; by (b), pruning never discards a path that could yield a better solution than the incumbent at the time of pruning. The algorithm returns the minimum over all explored paths. Memoisation may reuse a cached result computed under a tighter global bound, potentially omitting branches that would become viable under a looser bound; hence, the returned value may overestimate but never underestimate the true optimum: $L_{\text{T-NDR}}(w) \ge \min_{p\in\mathcal{P}(w)} L_p(w)$.

\end{proof}

\begin{corollary}[ASI Upper Bound]\label{cor:tndr_bound}
For all $w \in \Sigma^+$, $\mathrm{ASI}(w) \leq L_{\text{T-NDR}}(w)$. Equality holds only if the optimal SLP is reachable through some sequence of Re-Pair substitutions and the branch-and-bound search with memoisation discovers it without being blocked by a cached overestimate.
\end{corollary}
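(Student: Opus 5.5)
The inequality $\mathrm{ASI}(w) \le L_{\text{T-NDR}}(w)$ will come from a soundness argument: whatever Re-Pair T-NDR returns is the size of an actual SLP for $w$, and $\mathrm{ASI}(w)$ is by \eqref{eq:asi_slp} the minimum over all such sizes. Theorem~\ref{thm:tndr}(c) gives the comparison with $\min_{p}L_p(w)$, but I will not route the bound through that quantity. Instead I will argue directly that $L_{\text{T-NDR}}(w)=L_{p^\ast}(w)$ for some fully explored path $p^\ast$ (the incumbent when the search stops), and that each $L_p(w)$ is the size of a valid SLP generating $w$.

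\textbf{Step 1: each $L_p(w)$ is the size of a valid SLP.} Fix a completed path $p$. Each Re-Pair round introduces one rule $X\to ab$ with $a,b$ terminals or earlier nonterminals. The dependency graph is therefore acyclic, and every nonterminal has exactly one production. When no digram repeats, $m$ symbols remain, and I join them with $m-1$ further binary concatenations, for example left-to-right, to form a root. This is the same accounting used in Lemma~\ref{lem:lowerbound}. Expanding the rules in reverse order reproduces $w$, so the result is an SLP for $w$ of size $L_p(w)$.

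\textbf{Step 2: the returned value is realised.} The algorithm only updates $G_{\text{best}}$ when a path is completed. Memoisation returns values that were themselves recorded at completions of some subpath. So the output equals $L_{p^\ast}(w)$ for a concrete path $p^\ast$, and by minimality of the smallest SLP, $\mathrm{ASI}(w)=\mathrm{SLP}(w)\le L_{p^\ast}(w)=L_{\text{T-NDR}}(w)$.

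\textbf{Step 3: the equality statement.} This is a necessary condition, proved by contrapositive. Suppose $\mathrm{ASI}(w)=L_{\text{T-NDR}}(w)$. Then the SLP produced by $p^\ast$ is a smallest SLP, so an optimal SLP is reachable by Re-Pair substitutions, namely along $p^\ast$. Moreover, the search actually returned it. Since pruning is safe by Theorem~\ref{thm:tndr}(b), the only way an optimal reachable path can be missed is a cached overestimate, as noted in the proof of Theorem~\ref{thm:tndr}(c). In the equality case this did not occur along the path that was discovered.

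\textbf{Main obstacle.} The delicate point is Step 2, specifically the cache. I must check that a cached value attached to an intermediate string $w'$ corresponds to a genuine completion from $w'$. Spliced onto the current prefix of $r$ rules, that completion must give a valid SLP whose size equals what the algorithm records. A cached value can be too large but never fictitious. I would verify this by induction on $|w'|$, using that string length strictly decreases along each path.
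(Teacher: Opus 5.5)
Your argument is correct, but it is organised differently from the paper's. The paper proves the bound as the chain $\mathrm{ASI}(w)\le\min_{p\in\mathcal{P}(w)}L_p(w)\le L_{\text{T-NDR}}(w)$. It takes the second inequality from Theorem~\ref{thm:tndr}(c) and leaves the first implicit, namely that every Re-Pair path yields an SLP of size $L_p(w)$. The equality clause then falls out mechanically: equality forces both links to be tight, and each tight link is exactly one of the two stated conditions.

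You bypass Theorem~\ref{thm:tndr}(c) and the intermediate quantity $\min_p L_p(w)$. You make the first link explicit (your Step 1) and replace the second by a realisability lemma: the returned value equals $L_{p^\ast}(w)$ for a concrete explored path $p^\ast$. This lemma is strictly stronger than Theorem~\ref{thm:tndr}(c), which follows from it in one line. It is also precisely the point the paper's proof of (c) only asserts (``may overestimate but never underestimate''). So your route is more self-contained and gives an explicit witness $p^\ast$ in the equality case, at the price of carrying out the cache induction you flag.

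That induction, on $|w'|$ (which strictly decreases along every branch), goes through provided \texttt{canonical\_form} is a symbol relabelling that commutes with \texttt{replace\_non\_overlapping}. Neither you nor the paper states this assumption. For the inequality alone you only need the returned value to be the size of \emph{some} SLP for $w$. That would survive even a canonical form that also identified reversals, since you can reverse every right-hand side of the spliced completion. Under such a canonical form, however, $p^\ast$ need not be a Re-Pair path from $w$, so the reachability clause would not survive.

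Two cosmetic points. Step 3 is a direct argument, not a contrapositive. The appeal to Theorem~\ref{thm:tndr}(b) there is also unnecessary: $L_{p^\ast}(w)=L_{\text{T-NDR}}(w)=\mathrm{ASI}(w)$ already says the search returned an optimal SLP reachable by Re-Pair substitutions.
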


\begin{proof}
By equation~\eqref{eq:asi_slp}, $\mathrm{ASI}(w) = \min_{\text{all SLPs}} |\text{SLP}| 
\le \min_{p \in \mathcal{P}(w)} L_p(w) \le L_{\text{T-NDR}}(w)$, 
where the last inequality is Theorem~\ref{thm:tndr}(c). 
If equality $\mathrm{ASI}(w) = L_{\text{T-NDR}}(w)$ holds, then all inequalities above must be equalities. In particular, $\min_{p\in\mathcal{P}(w)} L_p(w) = \mathrm{ASI}(w)$, so the optimal SLP is reachable through some Re-Pair path; and $L_{\text{T-NDR}}(w) = \min_{p\in\mathcal{P}(w)} L_p(w)$, so the search returned the true minimum without being blocked by a cached overestimate.
\end{proof}

\begin{observation}[Empirical dominance of T-NDR over FIFO and FIXED]\label{obs:tndr_empirical}
Exhaustive execution on all $408$ strings confirmed that T-NDR never yields a larger grammar than FIFO or FIXED. Excluding the $83$ unary strings where all variants coincide, T-NDR produced a strictly smaller grammar than both deterministic variants in $31$ of the remaining $325$ strings (improvements of $1$--$3$ steps relative to $\min(\text{FIFO},\text{FIXED})$), with zero instances of inferior performance.
\end{observation}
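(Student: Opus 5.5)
The statement has two parts. The inequality $L_{\text{T-NDR}}(w)\le\min(L_{\text{FIFO}}(w),L_{\text{FIXED}}(w))$ I would try to make structural, so that it no longer rests on the 408 test cases. The counts ($83$, $31$, $325$, improvements of $1$--$3$) are finite computational facts, and I would establish them by a reproducible exhaustive run.

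\textbf{Structural inequality.} First, the deterministic variants are single elements of $\mathcal{P}(w)$. FIFO and FIXED each select one of the digrams of maximum frequency at every round, so each traces exactly one branch of the search tree of Theorem~\ref{thm:tndr}(a). Second, I would show that the value returned by T-NDR never exceeds the grammar size of either branch. Theorem~\ref{thm:tndr}(c) only gives $L_{\text{T-NDR}}(w)\ge\min_p L_p(w)$, and the memoisation caveat there means a cached overestimate could, in principle, block the FIFO or FIXED branch. I would therefore fix the implementation so that the incumbent is seeded:
\[
G_{\text{best}}\leftarrow\min\bigl(L_{\text{FIFO}}(w),\,L_{\text{FIXED}}(w)\bigr)
\]
before the search begins. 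Since $G_{\text{best}}$ only ever decreases and the returned value is the final incumbent, the inequality follows at once. If the code instead orders tied digrams by first occurrence, the first leaf reached is exactly the FIFO path, which gives $G_{\text{best}}\le L_{\text{FIFO}}$ after the first descent. FIXED would then still need the seeding argument.

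\textbf{Unary strings.} For $w=a^N$ there is only one distinct digram at every round, so no ties ever occur. Hence $|\mathcal{P}(w)|=1$ and all three variants coincide. This justifies excluding the $83$ unary strings.

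\textbf{Counts.} For the remaining $325$ strings, I would run FIFO, FIXED and T-NDR, record the differences $\min(L_{\text{FIFO}},L_{\text{FIXED}})-L_{\text{T-NDR}}$, and check three things: every difference is nonnegative, exactly $31$ are positive, and the positive ones lie in $\{1,2,3\}$. As a safeguard against implementation error, I would cross-check each T-NDR value against the trivial lower bound of Lemma~\ref{lem:lowerbound} applied to the final string. For short strings I would also compare it with an exact ASI computation, confirming $\mathrm{ASI}(w)\le L_{\text{T-NDR}}(w)$ as required by Corollary~\ref{cor:tndr_bound}.

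The main obstacle is the interaction between the cache and the incumbent. If the implementation does not seed $G_{\text{best}}$ with the deterministic results, the dominance cannot be derived from Theorem~\ref{thm:tndr} alone. It would then only be an empirical fact, to be verified string by string. I would try the seeding argument first, since it turns the observation into a guaranteed property.
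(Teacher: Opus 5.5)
Your ``Counts'' step is the paper's entire justification. The observation is a purely computational fact, obtained by running FIFO, FIXED and T-NDR on the 408 strings and tabulating $\min(L_{\text{FIFO}},L_{\text{FIXED}})-L_{\text{T-NDR}}$; the paper gives no structural argument. The gap is in your structural half.

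Algorithm~\ref{alg:t_ndr_final_fixed} starts from $G_{\text{best}}\leftarrow\infty$ and returns the root's $min\_L$. Seeding the incumbent defines a different algorithm, and you would also have to return $G_{\text{best}}$ rather than $min\_L$, since with a finite seed every branch can be pruned and the root returns $\infty$. That algorithm is $\min(L_{\text{FIFO}},L_{\text{FIXED}},L_{\text{seeded search}})$. Dominance holds for it by construction, and this says nothing about the T-NDR whose outputs the observation reports. Moreover, the seed tightens pruning from the first node, which changes which residuals are cached. The seeded search can therefore miss improvements the unseeded one finds, or the reverse. The figures $31$ and $1$--$3$ need not survive the change and would have to be recomputed. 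Seeding replaces the statement rather than proving it.

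For the algorithm as specified, your first-descent remark goes further than you use it. Every update of $G_{\text{best}}$ is a value returned by some call, and each returned value is min-propagated to the root, so the root's return equals the final $G_{\text{best}}$. Suppose tied digrams are enumerated in first-occurrence order. Then the first descent meets no cache entry and no finite bound, its leaf is the FIFO leaf, and $L_{\text{T-NDR}}\le L_{\text{FIFO}}$ follows without seeding. This requires FIFO to use the same pair counting (overlapping vs.\ non-overlapping occurrences) and replacement as \texttt{count\_adjacent\_pairs} and \texttt{replace\_non\_overlapping}. That is an assumption about the implementations, not a consequence of the definitions.

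The FIXED half cannot be made structural. Residuals are cached as $min\_L-r_0$ and reused when the same canonical string is reached with fewer rules $r<r_0$. At that smaller $r$, branches pruned under $r_0$ would have survived, possibly including the FIXED continuation. So nothing in the unseeded algorithm prevents it from returning more than $L_{\text{FIXED}}$, and the string-by-string check is indispensable, not a safeguard.

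A small correction on the unary case: after one round, $a^N$ becomes $A^{\lfloor N/2\rfloor}$ or $A^{\lfloor N/2\rfloor}a$, so it is not true that only one distinct digram exists at every round. The right invariant is that the string has the form $X^mY_1\cdots Y_j$ with pairwise distinct symbols. Every digram other than $XX$ then occurs once, so the maximum-frequency digram is unique whenever it repeats.
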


\subsection{Scope and Limitations}\label{subsec:scope}

Both the ASI and GBCA operate on linear string representations without modelling geometric or thermodynamic constraints. Consequently, the ASI quantifies informational and constructional complexity; minimal grammatical descriptions need not correspond to energetically minimal or experimentally feasible synthesis pathways. Results should be interpreted as statements about algorithmic structure and compressibility, not as direct models of physical assembly.

\begin{table}[H]
\centering
\caption{Approximation ratio comparison of selected algorithms. Algorithms marked with \checkmark\ in the ``Evaluated'' column were included in the empirical study of Section~\ref{sec:Results}.}
\label{tab:approx_ratio}
\small
\begin{tabular}{|l|c|c|c|c|}
\hline
\textbf{Algorithm} & \textbf{Upper Bound} & \textbf{Lower Bound} & \textbf{Reference} & \textbf{Evaluated} \\
\hline
Limit (Yao based) & $O(\log n / \log \log n)$ & --- & \cite{yao_evaluation_1976} & $\times$ \\
\hline
Charikar/Rytter/Sakamoto & $O(\log n)$ & --- & \cite{rytter_application_2003, charikar_smallest_2005, sakamoto_space-saving_2009} & $\times$ \\
\hline
Charikar/Rytter/Sakamoto$^*$ & $O(\log(n/g^*))$ & --- & \cite{rytter_application_2003, charikar_smallest_2005, sakamoto_fully_2005} & \checkmark \\
\hline
LCA & $O(\log^2 n)$ & --- & \cite{maruyama_fully-online_2013, takabatake_space-optimal_2017} & \checkmark\ (SOLCA) \\
\hline
Charikar VIIA & $O(\log^3 n)$ & --- & \cite{charikar_approximating_2002} & $\times$ \\
\hline
Bisection & $O((n/\log n)^{1/2})$ & $\Omega((n/\log n)^{1/2})$ & \cite{kieffer_universal_2000, charikar_smallest_2005} & \checkmark \\
\hline
LZ78 family & $O((n/\log n)^{2/3})$ & $\Omega(n^{2/3}/{\log n})$ & \cite{charikar_smallest_2005, sakamoto_space-saving_2009} & \checkmark\ (LZD) \\
\hline
Re-Pair & $O((n/\log n)^{2/3})$ & $\Omega(\log n/ \log \log n)$ & \cite{Lehman2002Approximation, charikar_smallest_2005} & \checkmark\ (FIFO, FIXED, T-NDR) \\
\hline
Sequitur/Sequential & $O((n/\log n)^{3/4})$ & $\Omega(n^{1/3})$ & \cite{nevill-manning_identifying_1997, kieffer_universal_2000, charikar_smallest_2005} & \checkmark\ (Sequitur) \\
\hline
\end{tabular}
\begin{tablenotes}
\small
\item[$*$] $g^*$ denotes the optimal grammar size.
\item Note: asymptotic ordering; for specific $n$, numerical values may differ from growth rate ordering.
\item Note: $\log n$ refers to $\log_2(n)$.
\end{tablenotes}
\end{table}

\section{Methods}\label{sec:Methods}

\subsection{Dataset}\label{sec:dataset}

The synthetic dataset we used for numerical experiments comprises: 

\begin{enumerate}
\item 83 unary strings ($\Sigma = 1$, $3 \leq N \leq 85$, not shown in Table Section);
\item 79 maxASI binary strings ($\Sigma = 2$, $7 \leq N \leq 85$, Table~\ref{Table:maxbinarya}, Table~\ref{Table:maxbinaryb});
\item 42 maxASI ternary strings ($\Sigma = 3$, $13 \leq N \leq 54$, Table~\ref{Table:maxternary});
\item 29 maxASI quaternary strings ($\Sigma = 4$, $21 \leq N \leq 49$, Table~\ref{Table:maxquaternary});
\item 20 random binary strings ($\Sigma = 2$, $41 \le N \le 71$, Table~\ref{Table:random_b2.pdf}, including two minASI strings for $N \in \{64, 68\}$);
\item 25 random fixed-length binary strings ($\Sigma = 2$, $N=30$, Table~\ref{Table:fixed_b2.pdf}, selected across five Shannon entropy intervals $0.21 < H < 0.997$);
\item 20 random ternary strings ($\Sigma = 3$, $20 \le N \le 51$, Table~\ref{Table:random_b3.pdf}, including a  minASI string of $N=27$);
\item 25 random fixed-length ternary strings ($\Sigma = 3$, $N = 30$, Table~\ref{Table:fixed_b3.pdf}, selected across five Shannon entropy intervals $0.41 < H < 1.57$); 
\item 20 random quaternary strings ($\Sigma = 4$, $20 \le N \le 50$, Table~\ref{Table:random_b4.pdf}, including a minASI string of $N=32$);
\item 25 random fixed-length quaternary strings ($\Sigma = 4$, $N = 30$, Table~\ref{Table:fixed_b4.pdf}, selected across five Shannon entropy intervals $0.62 < H < 1.99$).
\end{enumerate}

Additionally, 40 biological sequences were evaluated: 20 proteins ($|\Sigma| \approx 20$) from the RCSB Protein Data Bank and 20 nucleic acid sequences ($|\Sigma|=4$) from NCBI GenBank (Table~\ref{tab:biological_results_all}).

MaxASI strings are nearly balanced strings of maximal grammatical complexity for their length, selected from our previous study~\cite{bieniawski_assembly_2026}. Their ASI values were computed via exhaustive search with the admissible lower bound $r+\lceil\log_2 N\rceil$ from Lemma~\ref{lem:lowerbound}; because this bound is a true mathematical lower bound on the size of any SLP completing the partial derivation, pruning by it never discards a branch containing an optimal solution. The ASI finder employs this bound exclusively and does not use the lossy canonical-form memoisation of Algorithm~\ref{alg:t_ndr_final_fixed}; consequently, the ground-truth values are not subject to the overestimation caveat of Corollary~\ref{cor:tndr_bound}. For unary strings $a^N$, the ASI equals the length of the shortest addition chain for $N$ (OEIS A003313)~\cite{A003313}; our computed values coincide with this sequence for every $3 \le N \le 85$. Full implementation details of the ASI finder are available in the \texttt{szluk/AssemblyTheory} repository (see Data Availability Statement).

Shannon entropy for a string $w$ with symbol frequencies $p_k$ is defined as
\begin{equation}\label{eq:shannon_entropy}
    H(w) = -\sum_{k=1}^{|\Sigma|} p_k \log_2 p_k.
\end{equation}

$H(w)$ measures symbol distribution balance, not complexity per se: two perfectly balanced strings can have different ASI values (e.g., $\mathrm{ASI}[01010101]=3$ vs. $\mathrm{ASI}[00011101]=6$, both with $H=1$).

\subsection{Algorithms}\label{sec:algorithms}

Table~\ref{tab:algorithm_summary} lists the eight evaluated algorithms. All implementations were modified to output the algorithm-specific index $L_A(w)$ defined as:
\begin{definition}[the algorithm-specific index]
An algorithm-specific index $L_A(w)$ is the number of binary concatenations in the grammar produced by $A$ to compress the string $w$, excluding terminal symbols from the initial count. 
\end{definition}
$L_A(w)$ matches the ASI definition, where each assembly step represents a binary concatenation.
Thus, to make a compression algorithm $A$ comparable to ASI($w$), we modified all implementations of the evaluated algorithms listed in Table~\ref{tab:algorithm_summary} to output $L_A(w)$ directly. 

\begin{table}[H]
\centering
\caption{Summary of evaluated algorithms. Mode: O = online, 
F = offline. All implementations are publicly available; see Data Availability Statement for repository links and pseudocode references.}
\label{tab:algorithm_summary}
\small
\begin{threeparttable}
\begin{tabular}{|l|c|l|l|}
\hline
\textbf{Algorithm} & \textbf{Mode} & \textbf{Strategy} 
    & \textbf{Ref.} \\
\hline
Sakamoto      & F & Run decomp.\ + greedy digram           
    & \cite{sakamoto_fully_2005} \\
SOLCA         & O & ESP + LCA on binary parse trees         
    & \cite{takabatake_space-optimal_2017} \\
Bisection     & F & Recursive power-of-two split            
    & \cite{kieffer_universal_2000} \\
Sequitur      & O & Digram uniqueness + rule utility        
    & \cite{nevill-manning_identifying_1997} \\
Re-Pair FIFO  & F & Max-freq digram; insertion-order ties   
    & \cite{larsson_offline_1999} \\
Re-Pair FIXED & F & Max-freq digram; lexicographic ties     
    & \cite{larsson_offline_1999} \\
Re-Pair T-NDR & F & Branch-and-bound over all tied digrams  
    & \cite{larsson_offline_1999} \\
LZD           & F & LZ78 factor-pair concatenation        
    & \cite{10.1007/978-3-319-19929-0_19} \\
\hline
\end{tabular}
\end{threeparttable}
\end{table}

FIFO and FIXED are deterministic tie-breaking rules (Section~\ref{subsec:repair_ties}). T-NDR replaces fixed tie-breaking with exhaustive branch-and-bound search over all tied digrams, using the lower bound $S_{\mathrm{lower}} = r + \lceil \log_2 N \rceil$ for pruning and canonical-form memoisation (Theorem~\ref{thm:tndr}). In the worst case the branching factor is exponential; in practice, pruning and memoisation keep computation tractable for $N \lesssim 300$ (for selected biological strings $|\Sigma| \approx 20$, $|\Sigma|=4$) on commodity hardware. A detailed empirical analysis of search-space size and runtime scaling is deferred to future work.

\subsection{Hardware and Implementation}\label{sec:Hardware}

Heuristic calculations were performed on a commodity desktop (Intel Celeron G1840, 2.80~GHz, 8~GB DDR3 RAM). ASI computations used an AMD Ryzen 9 3950X (16 cores, 3.5~GHz, 128~GB DDR4) and, for selected instances, the LUMI supercomputer (EuroHPC). All timing measurements for the Pareto analysis were conducted on the commodity desktop to ensure comparability.

\subsection{Evaluation Metrics}\label{sec:metrics}

The mean relative error quantifies approximation quality:
\begin{equation}\label{eq:mean_relative_error}
    \overline{\Delta_A} = \frac{1}{m} \sum_{i=1}^{m} \frac{L_A(w_i) - \mathrm{ASI}(w_i)}{\mathrm{ASI}(w_i)}.
\end{equation}
$\overline{\Delta_A} = 0$ indicates exact agreement; positive values measure overestimation.

Spearman's $\rho_A$ and Kendall's $\tau_A$ rank correlation assess whether an algorithm preserves the complexity ranking of strings. A high $\rho_A$ with moderate $\overline{\Delta_A}$ indicates that the algorithm is useful for comparative studies despite absolute bias.

The entropy--error correlation coefficient $\rho_{(H(w),\Delta_A)}$ measures whether error grows with Shannon entropy. Positive values indicate degradation in balanced-alphabet regimes; near-zero values indicate entropy-invariant performance.

Pareto efficiency - an algorithm $A$ dominates $B$ ($A \succ B$) if it is strictly better in at least one objective (error or time) and no worse in the other:
\begin{equation}\label{eq:dominance}
    (\overline{\Delta_A} \leq \overline{\Delta_B} \land C_A < C_B) \lor (\overline{\Delta_A} < \overline{\Delta_B} \land C_A \leq C_B).
\end{equation}
The Pareto frontier $\mathcal{P}$ comprises all tested algorithms (Figure~\ref{fig:pareto}).

Pairwise differences in mean relative error were assessed via Wilcoxon signed-rank tests on set-level means ($m=10$ paired observations), with Holm--Bonferroni correction and rank-biserial effect sizes.

Significance stars in Tables~\ref{tab:corr_binary_entropy}--\ref{tab:corr_quaternary_entropyn30} 
report uncorrected $p$-values for individual Spearman and Kendall tests. These are exploratory correlations rather than confirmatory hypotheses; we do not apply family-wise error correction because (i) each cell reports a single pre-specified metric for one algorithm--dataset pair, and (ii) the primary comparative inference is supported by the Wilcoxon signed-rank analysis in Table~\ref{tab:wilcoxon}, which uses Holm--Bonferroni correction. 
Readers should interpret the starred correlations as descriptive strength indicators, not as simultaneous guarantees.

\section{Results}\label{sec:Results}

This section presents the results of the numerical experiments on the dataset from subsection~\ref{sec:dataset} using the algorithms described in subsection~\ref{sec:algorithms}.

\subsection{MaxASI Strings}

\subsubsection{Binary, Ternary, and Quaternary maxASI}

\begin{figure}[htbp]
    \centering
    \includegraphics[width=0.9\textwidth]{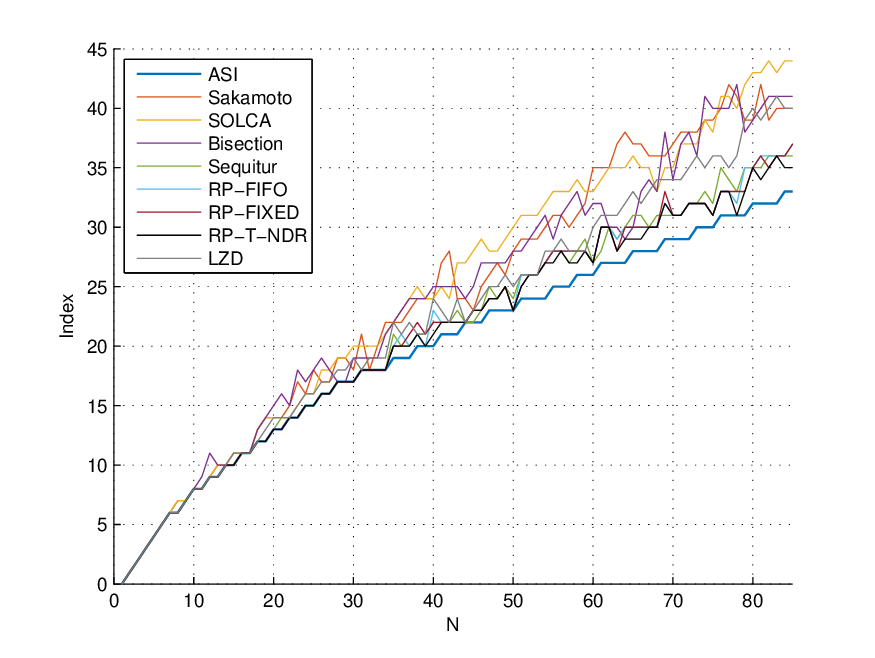}
    \caption{ASI vs.\ algorithm-specific index for the binary maxASI strings.}
    \label{fig:Fig_b2}
\end{figure}

\begin{table}[htbp]
\centering
\small
\caption{Rank correlations and mean relative error for the binary maxASI strings.}
\label{tab:corr_binary_maxasi}
\begin{tabular}{|l|c|c|c|}
\hline
\textbf{Algorithm $A$} &
\textbf{Spearman's $\rho_A$} &
\textbf{Kendall's $\tau_A$} &
\textbf{Mean error $\overline{\Delta_A}$ [\%]} \\
\hline
Sakamoto      & 0.9919 & 0.9507 & 18.52 \\
SOLCA         & 0.9963 & 0.9753 & 20.41 \\
Bisection     & 0.9883 & 0.9359 & 17.28 \\
Sequitur      & 0.9977 & 0.9822 &  5.34 \\
Re-Pair FIFO  & 0.9973 & 0.9801 &  4.96 \\
Re-Pair FIXED & 0.9974 & 0.9816 &  4.89 \\
Re-Pair T-NDR & 0.9968 & 0.9773 &  4.03 \\
LZD           & 0.9963 & 0.9717 & 10.75 \\
\hline
\end{tabular}
\end{table}

All tested algorithms demonstrate high rank correlations on the binary maxASI set ($79$ strings, $7 \leq N \leq 85$). Sequitur achieves the highest $\rho$, while Re-Pair T-NDR attains the lowest mean error ($4.03\%$). SOLCA and Sakamoto, despite strong rank preservation, exhibit the highest mean errors ($\sim$20\%).

\begin{figure}[htbp]
    \centering
    \includegraphics[width=0.9\textwidth]{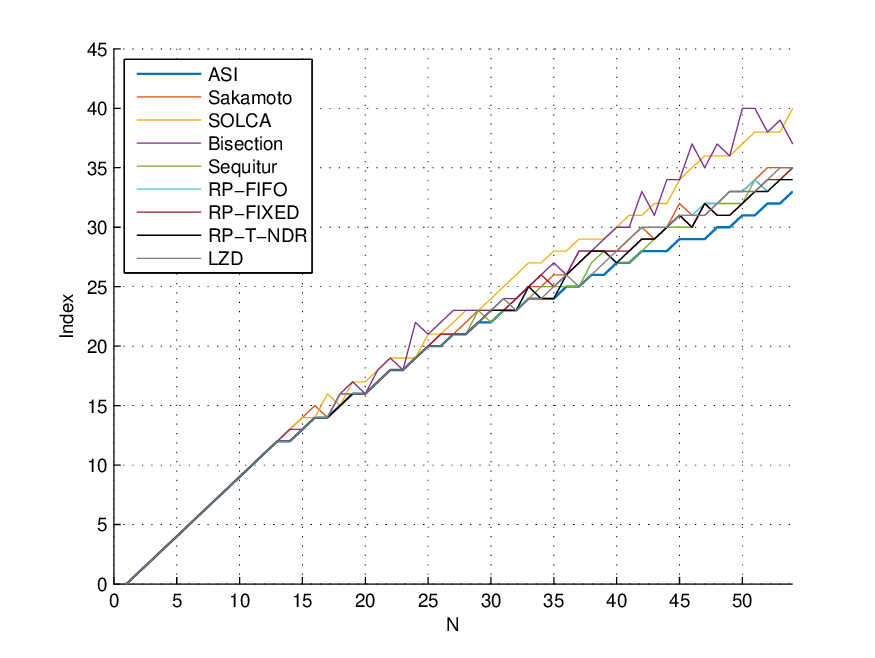}
    \caption{ASI vs.\ algorithm-specific index for the ternary maxASI strings.}
    \label{fig:Fig_b3}
\end{figure}

\begin{table}[htbp]
\centering
\small
\caption{Rank correlations and mean relative error for the ternary maxASI strings.}
\label{tab:corr_ternary_maxasi}
\begin{tabular}{|l|c|c|c|}
\hline
\textbf{Algorithm $A$} &
\textbf{Spearman's $\rho_A$} &
\textbf{Kendall's $\tau_A$} &
\textbf{Mean error $\overline{\Delta_A}$ [\%]} \\
\hline
Sakamoto      & 0.9974 & 0.9809 &  5.04 \\
SOLCA         & 0.9977 & 0.9815 & 11.43 \\
Bisection     & 0.9929 & 0.9553 & 10.77 \\
Sequitur      & 0.9960 & 0.9761 &  2.24 \\
Re-Pair FIFO  & 0.9967 & 0.9785 &  2.94 \\
Re-Pair FIXED & 0.9970 & 0.9809 &  3.22 \\
Re-Pair T-NDR & 0.9962 & 0.9767 &  2.46 \\
LZD           & 0.9985 & 0.9887 &  3.14 \\
\hline
\end{tabular}
\end{table}

On the ternary set ($42$ strings, $13 \leq N \leq 54$), LZD leads in rank correlation ($\rho = 0.9985$), while Sequitur achieves the lowest error ($2.24\%$). Sakamoto recovers strongly: its error drops from $\sim$19\% (binary) to $5.04\%$, suggesting that ternary maxASI strings contain more structured repetitions aligned with Sakamoto's run-decomposition phase. SOLCA and Bisection maintain high correlations ($\rho > 0.99$) despite double-digit errors---a decoupling between ranking and precision.

\begin{figure}[htbp]
    \centering
    \includegraphics[width=0.9\textwidth]{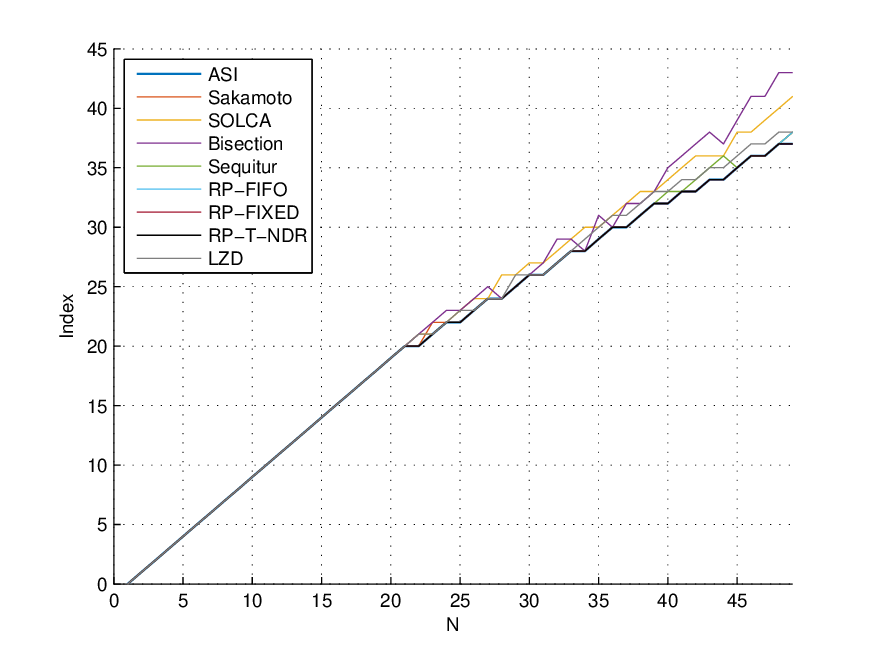}
    \caption{ASI vs.\ algorithm-specific index for the quaternary maxASI strings.}
    \label{fig:Fig_b4}
\end{figure}

\begin{table}[htbp]
\centering
\small
\caption{Rank correlations and mean relative error for the quaternary maxASI strings.}
\label{tab:corr_quaternary_maxasi}
\begin{tabular}{|l|c|c|c|}
\hline
\textbf{Algorithm $A$} &
\textbf{Spearman's $\rho_A$} &
\textbf{Kendall's $\tau_A$} &
\textbf{Mean error $\overline{\Delta_A}$ [\%]} \\
\hline
Sakamoto      & 0.9995 & 0.9962 & 0.26 \\
SOLCA         & 0.9968 & 0.9785 & 5.03 \\
Bisection     & 0.9962 & 0.9735 & 6.38 \\
Sequitur      & 0.9970 & 0.9848 & 0.61 \\
Re-Pair FIFO  & 0.9999 & 0.9987 & 0.09 \\
Re-Pair FIXED & 1.0000 & 1.0000 & 0.00 \\
Re-Pair T-NDR & 1.0000 & 1.0000 & 0.00 \\
LZD           & 0.9988 & 0.9911 & 2.15 \\
\hline
\end{tabular}
\end{table}

For the quaternary set ($29$ strings, $21 \leq N \leq 49$), Re-Pair FIXED and T-NDR achieve exact agreement with ASI ($\rho = \tau = 1.000$, $0.00\%$ error). Sakamoto and Sequitur also perform strongly (errors $0.26\%$ and $0.61\%$), reversing their binary behaviour. SOLCA and Bisection remain the least precise ($\sim$5--6\% error) despite high correlations.

\subsubsection{83 Unary Strings}

\begin{figure}[htbp]
    \centering
    \includegraphics[width=0.9\textwidth]{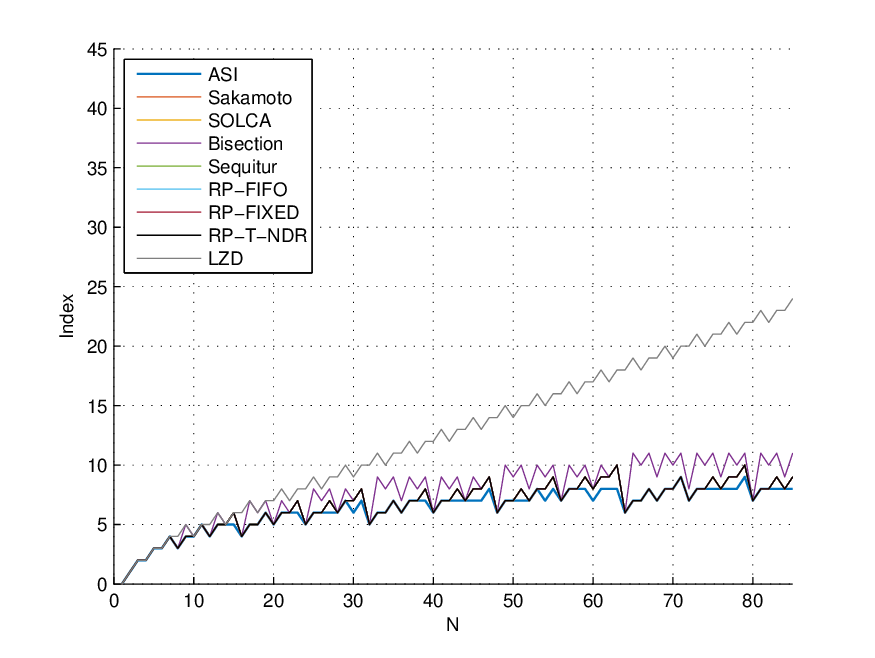}
    \caption{ASI vs.\ algorithm-specific index for unary strings.}
    \label{fig:Fig_b1}
\end{figure}

\begin{table}[htbp]
\centering
\small
\caption{Rank correlations and mean relative error for unary strings.}
\label{tab:corr_unary}
\begin{tabular}{|l|c|c|c|}
\hline
\textbf{Algorithm $A$} &
\textbf{Spearman's $\rho_A$} &
\textbf{Kendall's $\tau_A$} &
\textbf{Mean error $\overline{\Delta_A}$ [\%]} \\
\hline
Sakamoto      & 0.9515 & 0.9133 & 4.16 \\
SOLCA         & 0.9515 & 0.9133 & 4.16 \\
Bisection     & 0.9152 & 0.8465 & 19.15 \\
Sequitur      & 0.9515 & 0.9133 & 4.16 \\
Re-Pair FIFO  & 0.9515 & 0.9133 & 4.16 \\
Re-Pair FIXED & 0.9515 & 0.9133 & 4.16 \\
Re-Pair T-NDR & 0.9515 & 0.9133 & 4.16 \\
LZD           & 0.9004 & 0.7996 & 91.39 \\
\hline
\end{tabular}
\end{table}

Six algorithms (Sakamoto, SOLCA, Sequitur, and all Re-Pair variants) yield identical results on unary strings ($\rho = 0.9515$, error $4.16\%$). LZD is an outlier: $\rho = 0.9004$ but $91.39\%$ error, indicating that while it captures the monotonic trend, it fails to estimate absolute grammar size.

Hucke et al.~\cite{hucke_approximation_2021} showed that Re-Pair on unary strings $a^N$ produces a grammar of size $\lfloor \log_2 N \rfloor + \nu(N) - 1$, where $\nu(N)$ is the Hamming weight of $N$. This equals the binary addition chain method (Chandah-sutra~\cite{A014701}). Our experiments confirm that Sakamoto, SOLCA, Sequitur, and all Re-Pair variants produce identical grammar sizes on the same strings. The $4.16\%$ mean error therefore reflects the gap between the binary addition chain and the optimal addition chain~\cite{A003313}---a limitation independent of heuristic strategy. The binary addition chain first deviates from the optimal at $N=15$ (binary chain length 6 vs.\ optimal 5: $1,2,3,6,12,15$). The largest gap in the tested range occurs at $N=63$ (binary 10 vs.\ optimal 8); overall, the two methods disagree for 24 of the 83 values.

\subsubsection{Alphabet-Size Effect and Divergence Point}

\begin{figure}[htbp]
    \centering
    \includegraphics[width=0.9\textwidth]{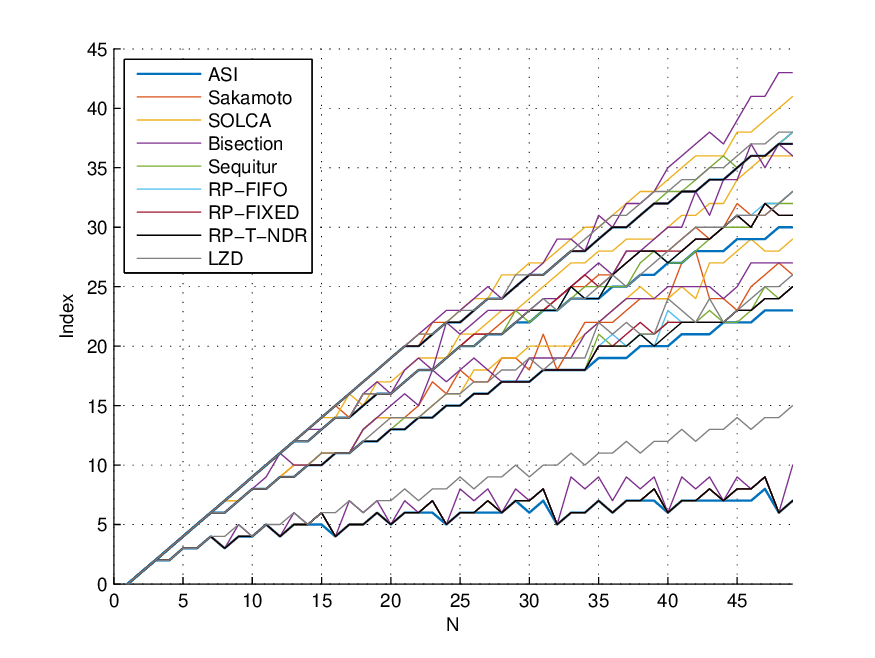}
    \caption{Compression algorithms vs.\ ASI ($|\Sigma| \in \{1,2,3,4\}$).}
    \label{fig:Fig_ball}
\end{figure}

While all algorithms follow monotonic growth relative to $N$, absolute values diverge with increasing string length. The relative performance ranking remains stable across alphabet sizes, but approximation quality improves as $|\Sigma|$ increases.

\begin{definition}[Divergence Point]\label{def:divergence_point}
For an algorithm $A$ and alphabet size $|\Sigma|$, the \emph{divergence point} $N^*(A, |\Sigma|)$ is the smallest $N$ for which there exists a string $w \in \Sigma^N$ with $L_A(w) > \mathrm{ASI}(w)$. Below this threshold, $L_A(w) = \mathrm{ASI}(w)$ for all $w$.
\end{definition}

In binary regimes, constrained combinatorial space causes frequent digram collisions and suboptimal branching early in assembly, leading to divergence at small $N$. Larger alphabets provide greater ``combinatorial resolution'', delaying divergence. This shift is visible in the maxASI figures across all alphabet sizes (except unary, where all algorithms except LZD and Bisection are equivalent).

\begin{table}[h!]
\centering
\caption{Re-Pair T-NDR performance across maxASI strings.}
\label{tab:alphabet_effect}
\small
\begin{tabular}{@{}lcc@{}}
\toprule
\textbf{Alphabet Size} & \textbf{Mean Error} & \textbf{Exact Match Rate} \\ \midrule
Binary ($|\Sigma|=2$)     & $4.03\%$ & $40.51\%$ \\ 
Ternary ($|\Sigma|=3$)    & $2.46\%$ & $52.38\%$ \\ 
Quaternary ($|\Sigma|=4$) & $0.00\%$ & $100.00\%$ \\ 
\bottomrule
\end{tabular}
\end{table}

The same trend holds for Sequitur (exact match rates: $36.71\%$ binary, $59.52\%$ ternary, $82.76\%$ quaternary) and Sakamoto ($11.39\%$, $26.19\%$, $93.10\%$). Sakamoto exhibits the strongest alphabet dependence: nearly ineffective on binary strings, it becomes competitive on quaternary.

\subsection{Random Strings}

The random set---spanning low- to high-entropy strings, both random-length and fixed-length ($N=30$) with stratified entropy intervals---reveals a different performance regime. All algorithms exhibit higher mean errors than on maxASI strings, and rank correlations vary across the entropy spectrum.

Data were pre-processed by grouping points by Shannon entropy $H$ and computing mean absolute error per group, sorted by ascending $H$. We acknowledge that $N=30$ is short for unequivocal generalisations.

\subsubsection{Random Binary Strings (20 strings, varied length)}

\begin{figure}[htbp]
    \centering
    \includegraphics[width=0.9\linewidth]{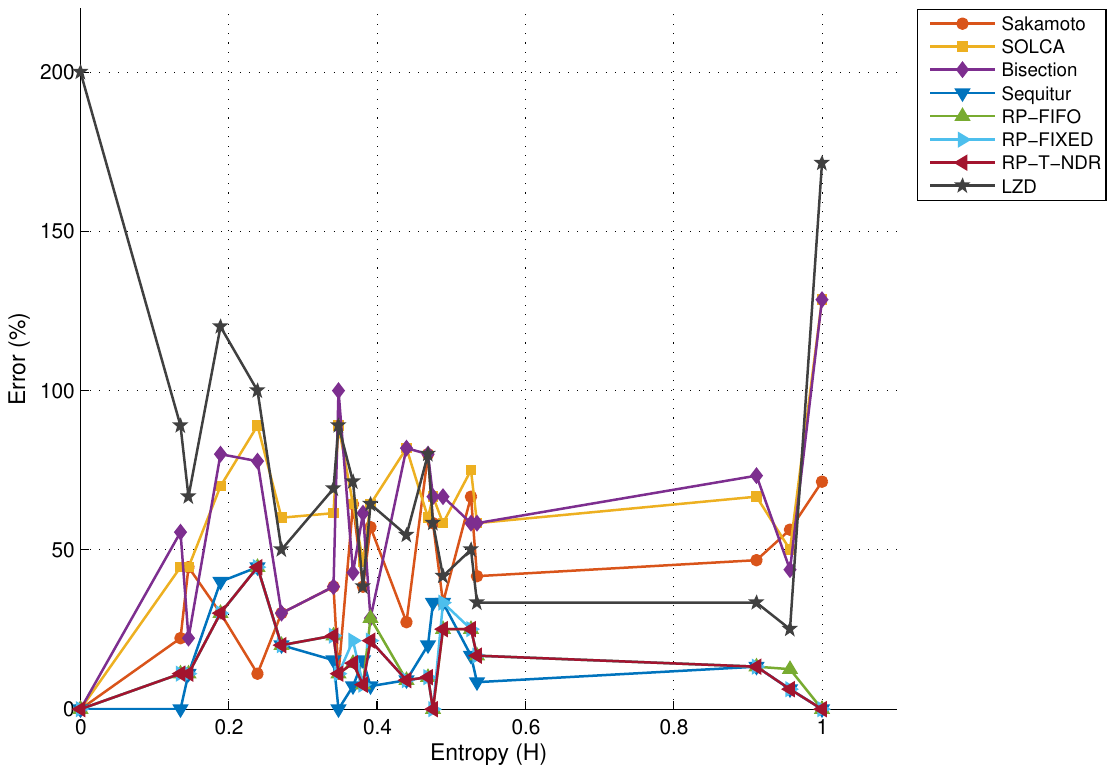}
    \caption{ASI estimation error vs.\ Shannon entropy for random binary strings.}
    \label{fig:entropy_error_binary}
\end{figure}

\begin{table}[htbp]
\centering
\small
\caption{Rank correlation and mean relative error for random binary strings. Significance stars report uncorrected $p$-values for individual tests; 
see Section~\ref{sec:metrics} for multiple-comparison policy.}
\label{tab:corr_binary_entropy}
\begin{tabular}{|l|c|c|c|c|}
\hline
\textbf{Algorithm} &
\textbf{$\rho_A$} &
\textbf{$\tau_A$} &
\textbf{$\overline{\Delta_A}$ [\%]} &
\textbf{$\rho_{(H,\Delta_A)}$} \\
\hline
Sakamoto      & 0.9272*** & 0.8228*** & 41.44 & 0.6564** \\
SOLCA         & 0.9171*** & 0.8058*** & 63.92 & 0.3097 \\
Bisection     & 0.8140*** & 0.6725*** & 59.72 & 0.3892* \\
Sequitur      & 0.8998*** & 0.7909*** & 15.05 & $-$0.0242 \\
Re-Pair FIFO  & 0.9271*** & 0.8410*** & 15.65 & $-$0.0928 \\
Re-Pair FIXED & 0.9265*** & 0.8391*** & 15.75 & $-$0.1780 \\
Re-Pair T-NDR & 0.9286*** & 0.8458*** & 14.98 & $-$0.1848 \\
LZD           & 0.5131*   & 0.3732*   & 75.27 & $-$0.5728** \\
\hline
\end{tabular}
\\[2pt]
\small \textsuperscript{***} $p<0.001$, \textsuperscript{**} $p<0.01$, \textsuperscript{*} $p<0.05$.
\end{table}

Re-Pair T-NDR and Sequitur are the most resilient (errors $\sim$15\%). The set reveals structural mismatches: for the zero-entropy string $0^{64}$ ($\mathrm{ASI}=6$), LZD yields $L=18$ ($+200\%$), while Bisection and all Re-Pair variants match the optimum. Conversely, Bisection fails on $(01)^{34}$ ($N=68, \mathrm{ASI}=7$), producing $L=16$ ($+128\%$); only Sequitur and Re-Pair family maintain optimality.

The error-entropy Spearman coefficient divides algorithms into two groups. Sakamoto ($\rho_{(H,\Delta_A)} = 0.6564^{**}$), SOLCA ($0.3097$), and Bisection ($0.3892^{*}$) show positive correlation---error grows with entropy. Re-Pair variants and Sequitur maintain negative coefficients, with T-NDR at $-0.1848$. LZD shows the most negative correlation ($-0.5728^{**}$) but its high mean error ($75.27\%$) and poor $\rho_A$ ($0.5131^{*}$) indicate consistent inaccuracy regardless of entropy.

Sequitur and Re-Pair family maintain the lowest error rates across the entropy spectrum. LZD demonstrates extreme instability at boundaries ($200\%$ at $H=0$, $>170\%$ at $H=1$). 
SOLCA and Bisection exhibit volatile, correlated error patterns with large fluctuations at mid-range entropy.

\subsubsection{Random Fixed-Length Binary Strings (25 strings, $N=30$)}

\begin{figure}[htbp]
    \centering
    \includegraphics[width=0.9\linewidth]{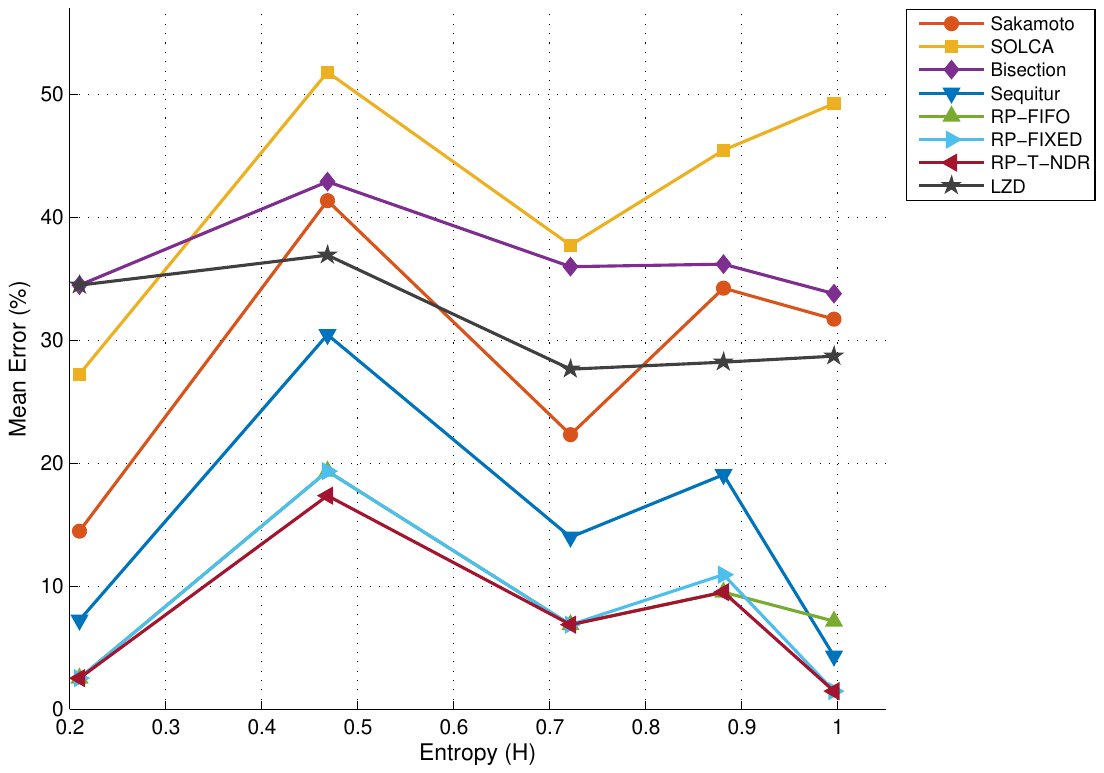}
    \caption{ASI estimation error vs.\ Shannon entropy for fixed-length binary strings ($N=30$).}
    \label{fig:entropy_error_binary_n30}
\end{figure}

\begin{table}[htbp]
\centering
\small
\caption{Rank correlation and mean relative error for fixed-length binary strings ($N=30$).}
\label{tab:corr_binary_entropyn30}
\begin{tabular}{|l|c|c|c|c|}
\hline
\textbf{Algorithm} &
\textbf{$\rho_A$} &
\textbf{$\tau_A$} &
\textbf{$\overline{\Delta_A}$ [\%]} &
\textbf{$\rho_{(H,\Delta_A)}$} \\
\hline
Sakamoto      & 0.8028*** & 0.6621*** & 28.80 & 0.2612 \\
SOLCA         & 0.8464*** & 0.7177*** & 42.27 & 0.2555 \\
Bisection     & 0.8989*** & 0.8046*** & 36.65 & $-$0.2007 \\
Sequitur      & 0.8411*** & 0.7078*** & 14.99 & $-$0.2010 \\
Re-Pair FIFO  & 0.9232*** & 0.8226*** &  9.06 & 0.0446 \\
Re-Pair FIXED & 0.9113*** & 0.8053*** &  8.21 & $-$0.1355 \\
Re-Pair T-NDR & 0.9161*** & 0.8175*** &  7.52 & $-$0.1499 \\
LZD           & 0.8997*** & 0.7887*** & 31.17 & $-$0.3405* \\
\hline
\end{tabular}
\\[2pt]
\small \textsuperscript{***} $p<0.001$, \textsuperscript{*} $p<0.05$.
\end{table}

All algorithms maintain $\rho_A > 0.8$. T-NDR achieves the lowest error ($7.52\%$). SOLCA and Bisection overestimate by large margins ($42.27\%$ and $36.65\%$). LZD improves relative to the varied-length set ($31.17\%$ vs.\ $75.27\%$) but remains less accurate than Sequitur ($14.99\%$).

Sakamoto and SOLCA retain positive entropy-error correlations ($0.2612$ and $0.2555$), confirming sensitivity to randomness even at fixed length. T-NDR maintains a slightly negative correlation ($-0.1499$). LZD shows the strongest negative correlation ($-0.3405^{*}$), supporting the observation that while consistently inaccurate, its performance is flat across entropy levels.

At low entropy ($H \approx 0.21$), Re-Pair family and Sequitur achieve near-zero error. A global peak occurs at mid-range ($H \approx 0.47$), with SOLCA exceeding $50\%$. Beyond this, Re-Pair family and Sequitur stabilise below $20\%$, while Sakamoto, LZD, and Bisection remain elevated. At maximum entropy ($H \approx 1$), Sequitur and T-NDR return to near-ideal accuracy; SOLCA remains the most discrepant.

\subsubsection{Random Ternary Strings (20 strings, varied length)}

\begin{figure}[htbp]
    \centering
    \includegraphics[width=0.9\linewidth]{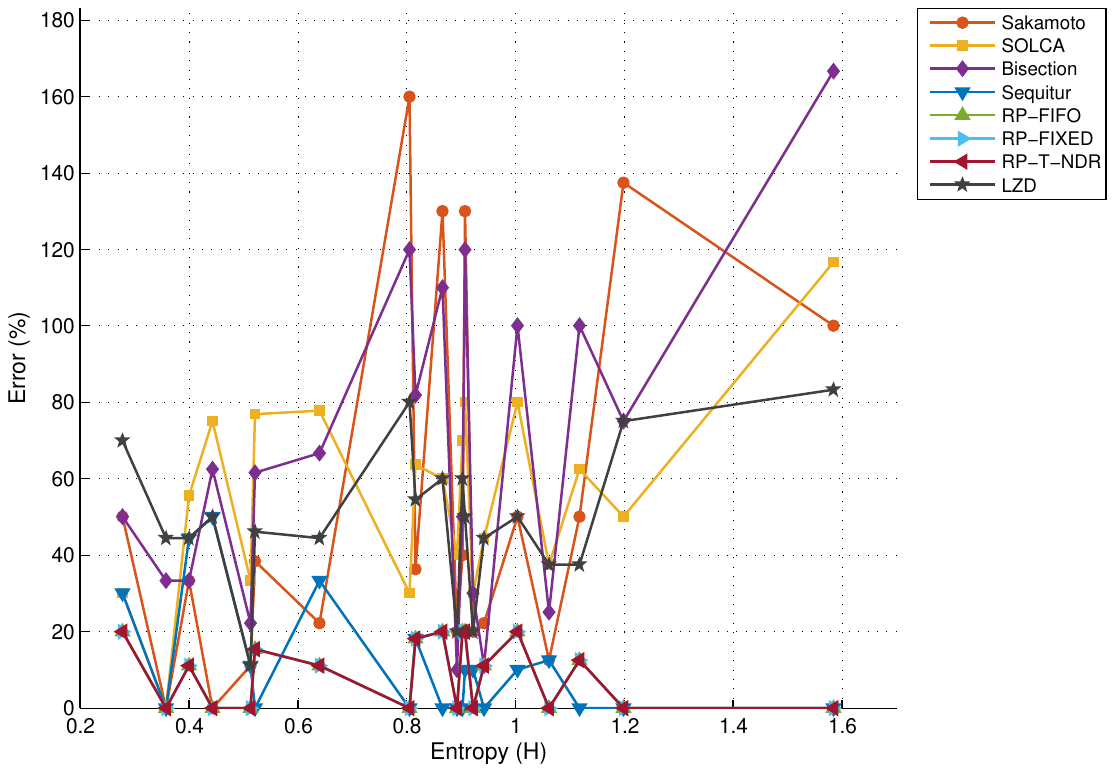}
    \caption{ASI estimation error vs.\ Shannon entropy for random ternary strings.}
    \label{fig:entropy_error_ternary}
\end{figure}

\begin{table}[htbp]
\centering
\small
\caption{Rank correlation and mean relative error for random ternary strings.}
\label{tab:corr_ternary_entropy}
\begin{tabular}{|l|c|c|c|c|}
\hline
\textbf{Algorithm} &
\textbf{$\rho_A$} &
\textbf{$\tau_A$} &
\textbf{$\overline{\Delta_A}$ [\%]} &
\textbf{$\rho_{(H,\Delta_A)}$} \\
\hline
Sakamoto      & 0.5923*  & 0.4894*  & 53.19 & 0.3771 \\
SOLCA         & 0.6472** & 0.5444** & 55.67 & 0.3217 \\
Bisection     & 0.4767   & 0.3464   & 66.46 & 0.2688 \\
Sequitur      & 0.6172*  & 0.5460*  & 11.48 & $-$0.4418* \\
Re-Pair FIFO  & 0.9253***& 0.8861***&  8.97 & $-$0.0653 \\
Re-Pair FIXED & 0.9253***& 0.8861***&  8.97 & $-$0.0653 \\
Re-Pair T-NDR & 0.9253***& 0.8861***&  8.97 & $-$0.0653 \\
LZD           & 0.7462** & 0.6319** & 49.15 & 0.0855 \\
\hline
\end{tabular}
\\[2pt]
\small \textsuperscript{***} $p<0.001$, \textsuperscript{**} $p<0.01$, \textsuperscript{*} $p<0.05$.
\end{table}

The Re-Pair family dominates ($8.97\%$ error, $\rho_A = 0.9253^{***}$). Sequitur exhibits decoupled accuracy and ranking: low error ($11.48\%$) but collapsed rank correlation ($\rho_A = 0.6172^{*}$). Sakamoto ($53.19\%$) and Bisection ($66.46\%$) perform poorly.

The ternary minASI string $(012)^9$ ($N=27, \mathrm{ASI}=6$) confirms this: all Re-Pair variants and Sequitur match the optimum, while LZD fails ($49.15\%$ error) despite moderate ranking ($\rho_A = 0.7462^{**}$).

Positive entropy-error correlations for Sakamoto ($0.3771$), SOLCA ($0.3217$), and Bisection ($0.2688$) confirm that their errors are driven by symbol distribution balance. Sequitur displays significant negative correlation ($-0.4418^{*}$). Re-Pair family maintains near-neutral balance ($-0.0653$).

As uniformity increases, Sakamoto, SOLCA, and Bisection exhibit substantial error spikes (Bisection reaches $166.67\%$ at highest entropy). Sequitur and Re-Pair family maintain near-zero error for several high-entropy points. A pronounced divergence occurs at $H \approx 0.8$--$1.2$, where Sakamoto and Bisection errors fluctuate aggressively.

\subsubsection{Random Fixed-Length Ternary Strings (25 strings, $N=30$)}

\begin{figure}[htbp]
    \centering
    \includegraphics[width=0.9\linewidth]{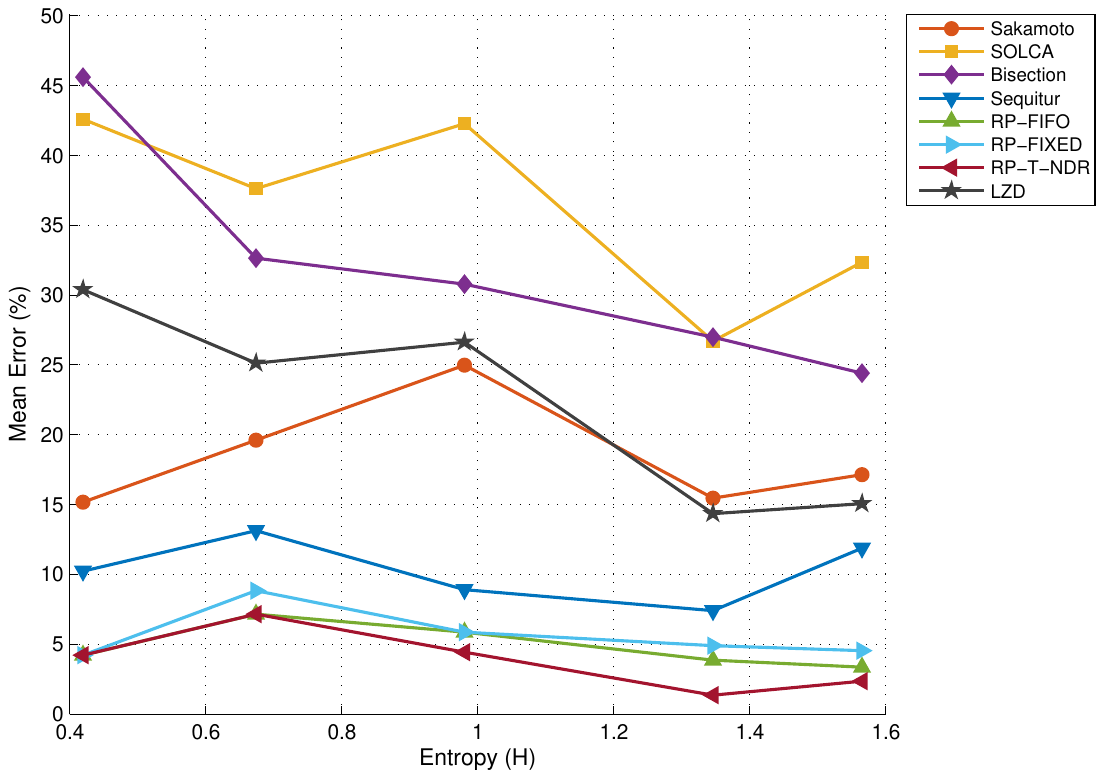}
    \caption{ASI estimation error vs.\ Shannon entropy for fixed-length ternary strings ($N=30$).}
    \label{fig:entropy_error_ternary_n30}
\end{figure}

\begin{table}[htbp]
\centering
\small
\caption{Rank correlation and mean relative error for fixed-length ternary strings ($N=30$).}
\label{tab:corr_ternary_entropyn30}
\begin{tabular}{|l|c|c|c|c|}
\hline
\textbf{Algorithm} &
\textbf{$\rho_A$} &
\textbf{$\tau_A$} &
\textbf{$\overline{\Delta_A}$ [\%]} &
\textbf{$\rho_{(H,\Delta_A)}$} \\
\hline
Sakamoto      & 0.9573*** & 0.8607*** & 18.47 & $-$0.0353 \\
SOLCA         & 0.9493*** & 0.8663*** & 36.31 & $-$0.3717* \\
Bisection     & 0.8798*** & 0.7362*** & 32.09 & $-$0.4556* \\
Sequitur      & 0.9383*** & 0.8375*** & 10.30 & 0.0436 \\
Re-Pair FIFO  & 0.9846*** & 0.9465*** &  4.88 & $-$0.1805 \\
Re-Pair FIXED & 0.9825*** & 0.9381*** &  5.66 & $-$0.1366 \\
Re-Pair T-NDR & 0.9873*** & 0.9534*** &  3.89 & $-$0.2896 \\
LZD           & 0.9468*** & 0.8612*** & 22.31 & $-$0.4888** \\
\hline
\end{tabular}
\\[2pt]
\small \textsuperscript{***} $p<0.001$, \textsuperscript{**} $p<0.01$, \textsuperscript{*} $p<0.05$.
\end{table}

Most algorithms achieve $\rho_A > 0.93^{***}$. T-NDR remains the most accurate ($3.89\%$). SOLCA and Bisection overestimate ($36.31\%$ and $32.09\%$). Sakamoto improves relative to the varied-length set ($18.47\%$ vs.\ $53.19\%$).

Entropy-error correlations are predominantly negative: LZD ($-0.4888^{**}$), Bisection ($-0.4556^{*}$), SOLCA ($-0.3717^{*}$). T-NDR ($-0.2896$) combines low error with negative correlation. Sakamoto ($-0.0353$) and Sequitur ($0.0436$) are near-neutral.

SOLCA and Bisection consistently represent the upper error bound ($25$--$45\%$). Sequitur and Re-Pair family maintain precision, often below $10\%$ as entropy rises. T-NDR approaches near-zero error at $H \approx 1.35$.

\subsubsection{Random Quaternary Strings (20 strings, varied length)}

\begin{figure}[htbp]
    \centering
    \includegraphics[width=0.9\linewidth]{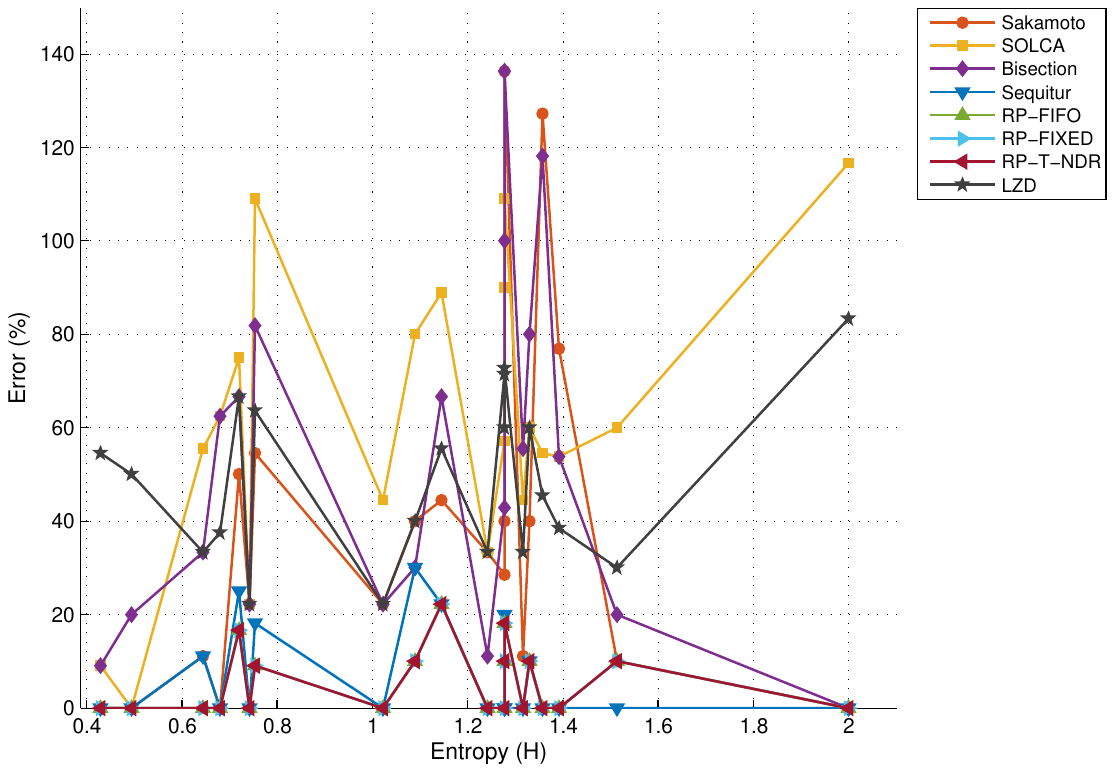}
    \caption{ASI estimation error vs.\ Shannon entropy for random quaternary strings.}
    \label{fig:entropy_error_quaternary}
\end{figure}

\begin{table}[htbp]
\centering
\small
\caption{Rank correlation and mean relative error for random quaternary strings.}
\label{tab:corr_quaternary_entropy}
\begin{tabular}{|l|c|c|c|c|}
\hline
\textbf{Algorithm} &
\textbf{$\rho_A$} &
\textbf{$\tau_A$} &
\textbf{$\overline{\Delta_A}$ [\%]} &
\textbf{$\rho_{(H,\Delta_A)}$} \\
\hline
Sakamoto      & 0.8224*** & 0.7133*** & 37.41 & 0.2947 \\
SOLCA         & 0.6331*   & 0.5038*   & 61.29 & 0.3272 \\
Bisection     & 0.7365**  & 0.6115**  & 51.62 & 0.1568 \\
Sequitur      & 0.8690*** & 0.7893*** &  6.83 & $-$0.2103 \\
Re-Pair FIFO  & 0.9297*** & 0.8833*** &  5.31 & 0.1255 \\
Re-Pair FIXED & 0.9297*** & 0.8833*** &  5.31 & 0.1255 \\
Re-Pair T-NDR & 0.9297*** & 0.8833*** &  5.31 & 0.1255 \\
LZD           & 0.9137*** & 0.8140*** & 48.69 & 0.1562 \\
\hline
\end{tabular}
\\[2pt]
\small \textsuperscript{***} $p<0.001$, \textsuperscript{**} $p<0.01$, \textsuperscript{*} $p<0.05$.
\end{table}

Re-Pair family defines the performance ceiling ($5.31\%$ error, $\rho_A = 0.9297^{***}$). LZD exhibits decoupled ranking and precision: high $\rho_A$ ($0.9137^{***}$) but $48.69\%$ error. Sequitur achieves low error ($6.83\%$) but trails in rank correlation ($0.8690^{***}$).

SOLCA shows moderate positive entropy-error correlation ($0.3272$), peaking at $116.67\%$ at $H=2.0$. LZD ($0.1562$) and Sakamoto ($0.2947$) display weaker positive correlations. Sequitur is negative ($-0.2103$). The Re-Pair family is slightly positive ($0.1255$) but with the lowest mean error, indicating robustness against alphabet-driven noise.

Error peaks concentrate for $H \in [1.2, 1.4]$, where Sakamoto and Bisection exceed $130\%$. At $H=2.0$, SOLCA reaches $116.67\%$, while Sequitur and the Re-Pair family converge to $0\%$ error.

\subsubsection{Random Fixed-Length Quaternary Strings (25 strings, $N=30$)}

\begin{figure}[htbp]
    \centering
    \includegraphics[width=0.9\linewidth]{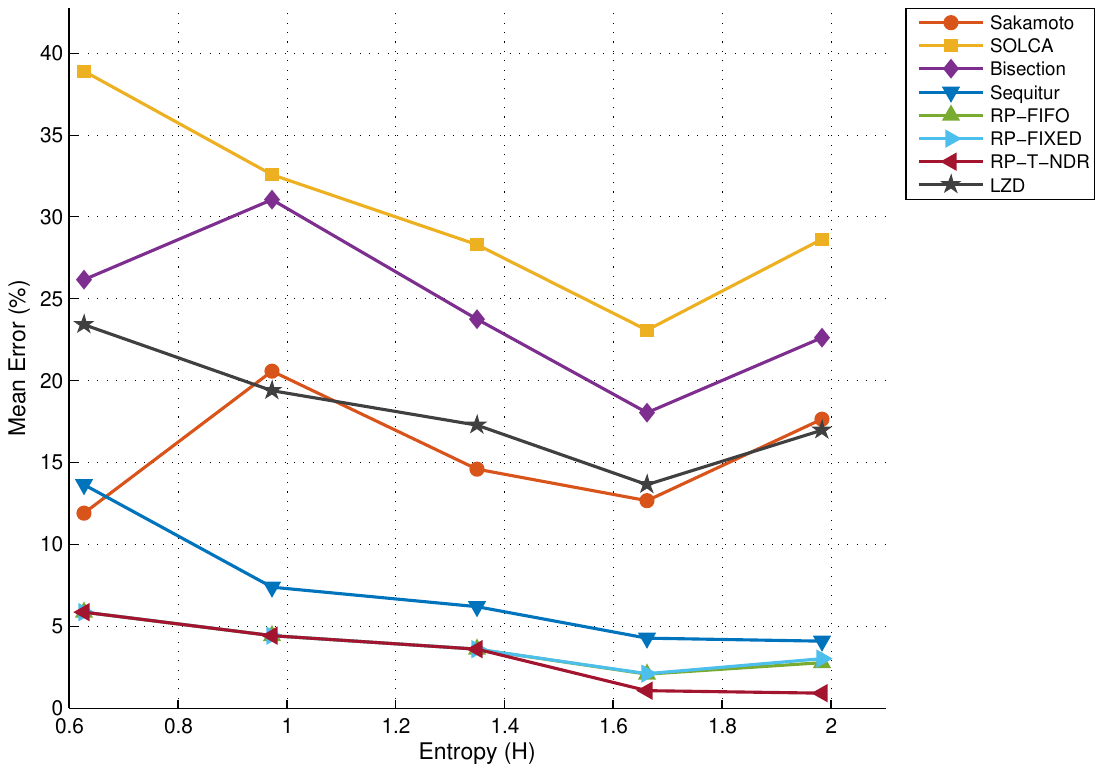}
    \caption{ASI estimation error vs.\ Shannon entropy for fixed-length quaternary strings ($N=30$).}
    \label{fig:entropy_error_quaternary_n30}
\end{figure}

\begin{table}[htbp]
\centering
\small
\caption{Rank correlation and mean relative error for fixed-length quaternary strings ($N=30$).}
\label{tab:corr_quaternary_entropyn30}
\begin{tabular}{|l|c|c|c|c|}
\hline
\textbf{Algorithm} &
\textbf{$\rho_A$} &
\textbf{$\tau_A$} &
\textbf{$\overline{\Delta_A}$ [\%]} &
\textbf{$\rho_{(H,\Delta_A)}$} \\
\hline
Sakamoto      & 0.9657*** & 0.8901*** & 15.47 & 0.1336 \\
SOLCA         & 0.9545*** & 0.8612*** & 30.30 & $-$0.3515* \\
Bisection     & 0.9402*** & 0.8377*** & 24.32 & $-$0.3178* \\
Sequitur      & 0.9667*** & 0.8979*** &  7.11 & $-$0.4078* \\
Re-Pair FIFO  & 0.9874*** & 0.9507*** &  3.74 & $-$0.2922* \\
Re-Pair FIXED & 0.9816*** & 0.9331*** &  3.80 & $-$0.2751* \\
Re-Pair T-NDR & 0.9872*** & 0.9524*** &  3.17 & $-$0.4475** \\
LZD           & 0.9604*** & 0.8834*** & 18.14 & $-$0.3494* \\
\hline
\end{tabular}
\\[2pt]
\small \textsuperscript{***} $p<0.001$, \textsuperscript{**} $p<0.01$, \textsuperscript{*} $p<0.05$.
\end{table}

Nearly all heuristics achieve $\rho_A > 0.94^{***}$. Re-Pair FIFO and T-NDR exceed $0.987^{***}$. T-NDR remains the most accurate ($3.17\%$). Sakamoto ($15.47\%$) and LZD ($18.14\%$) improve; SOLCA ($30.30\%$) and Bisection ($24.32\%$) remain the most discrepant.

Entropy-error correlations are strongly negative across almost all heuristics. T-NDR shows the most pronounced negative correlation ($-0.4475^{**}$), coupled with the lowest error. Sequitur ($-0.4078^{*}$), SOLCA ($-0.3515^{*}$), Bisection ($-0.3178^{*}$), and LZD ($-0.3494^{*}$) follow. Sakamoto is the only near-neutral heuristic ($0.1336$), indicating constant error regardless of symbol distribution balance.

Mean error declines with increasing entropy, particularly for $H \in [1.0, 1.7]$. Sequitur and Re-Pair family show the greatest stability; T-NDR falls below $2\%$ at $H \approx 1.66$ and $H \approx 1.98$.

\subsection{Biological Data}

To evaluate GBCA performance in a biological context, we compiled 40 strings: 20 polypeptide chains from RCSB PDB and 20 nucleic acid sequences from NCBI GenBank. Proteins use $|\Sigma| \approx 20$; nucleic acids use $|\Sigma| = 4$.

Table~\ref{tab:biological_results} presents a representative subset; full results are in Table~\ref{tab:biological_results_all} in Appendix.

\begin{table}[htbp]
\centering
\caption{Compression results for representative biological sequences. `n.a.' indicates ASI computation exceeded limits.}
\label{tab:biological_results}
\small
\begin{tabular}{|l|c|c|c|c|c|c|c|c|c|c|c|}
\hline
\textbf{ID} & \textbf{$N$} & \textbf{$|\Sigma|$} & \textbf{ASI} & \textbf{Sak.} & \textbf{SOL.} & \textbf{Bis.} & \textbf{Seq.} & \textbf{FIFO} & \textbf{FIX.} & \textbf{T-NDR} & \textbf{LZD} \\
\hline
\multicolumn{12}{|l|}{\textit{Polypeptide Chains}} \\
\hline
1UBQ & 76 & 18 & 64 & 64 & 71 & 75 & 67 & 64 & 64 & 64 & 68 \\
1BNR & 110 & 18 & n.a. & 97 & 102 & 105 & 96 & 96 & 96 & 96 & 99 \\
3GBN & 179 & 20 & n.a. & 158 & 170 & 174 & 156 & 154 & 154 & 153 & 164 \\
\hline
\multicolumn{12}{|l|}{\textit{Nucleic Acid Sequences}} \\
\hline
NR\_029492 & 71 & 4 & 36 & 43 & 49 & 49 & 42 & 40 & 40 & 37 & 44 \\
NR\_033048 & 110 & 4 & n.a. & 66 & 74 & 76 & 65 & 65 & 64 & 63 & 68 \\
NR\_002716 & 187 & 4 & n.a. & 105 & 116 & 120 & 102 & 103 & 105 & 101 & 111 \\
\hline
\end{tabular}
\end{table}

Re-Pair T-NDR consistently provides the tightest upper bound on ASI. For benchmark sequences with known ASI (1UBQ, NR\_029492), T-NDR reaches the global optimum or stays within $2.7\%$. Against deterministic baselines, gains are modest due to a shared algorithmic foundation. For 1UBQ, T-NDR matches FIFO/FIXED ($L=64$); across the full set, T-NDR improves in 16 of 40 cases (8 proteins, 8 nucleic acids), with reductions of 1--3 steps (e.g., 3-step reduction for NR\_029492 and NR\_029856 vs.\ best deterministic output).

This set serves as proof-of-concept; biological conclusions require larger samples and structural validation.

\subsection{Partial Spearman Analysis: Disentangling Length and Complexity}

The empirical results confirm partial decoupling between Spearman's $\rho$ and absolute numerical error. SOLCA exhibits persistent positive bias: high $\rho$ despite errors frequently exceeding $40\%$. LZD is unstable at entropy boundaries ($200\%$ at $H=0$, $171.43\%$ at $H=1$) but effective across broader spectra ($\rho > 0.90$). Bisection and Sakamoto fail in balanced-alphabet regimes, with errors above $130\%$ and degraded $\rho$.
Re-Pair family and Sequitur achieve the most favourable balance: low numerical errors (often $<10\%$) and high rank preservation.

To assess whether high $\rho$ values on maxASI sets are inflated by shared monotonic growth of ASI and $L_A(w)$ with $N$, we computed partial Spearman correlations controlling for $N$ via regression residuals. This inflation is most pronounced in maxASI sets, where $N$ increases monotonically.

\begin{table}[h!]
\centering
\caption{Standard vs.\ partial Spearman $\rho$ (controlling for $N$) on maxASI sets.}
\label{tab:partial_spearman}
\small
\setlength{\tabcolsep}{12pt}
\begin{tabular}{|l|l|c|c|}
\hline
\textbf{Set} & \textbf{Algorithm} & \textbf{Std $\rho$} & \textbf{Partial $\rho$} \\
\hline
\multirow{4}{*}{Binary maxASI} 
 & Sakamoto   & 0.992 & 0.085\textsuperscript{ns}  \\
 & Bisection  & 0.988 & $-$0.075\textsuperscript{ns} \\
 & RP-T-NDR   & 0.997 & 0.409*** \\
 & LZD        & 0.996 & $-$0.046\textsuperscript{ns} \\
\hline
\multirow{4}{*}{Ternary maxASI}
 & Sakamoto   & 0.997 & 0.297\textsuperscript{ns} \\
 & RP-FIFO    & 0.997 & 0.173\textsuperscript{ns} \\
 & RP-T-NDR   & 0.996 & 0.208\textsuperscript{ns} \\
 & LZD        & 0.998 & 0.514*** \\
\hline
\multirow{4}{*}{Quaternary maxASI}
 & Sakamoto   & 1.000 & 0.836*** \\
 & RP-FIXED   & 1.000 & 1.000*** \\
 & RP-T-NDR   & 1.000 & 1.000*** \\
 & LZD        & 0.999 & 0.584* \\
\hline
\multicolumn{4}{l}{\footnotesize *** $p < 0.001$, * $p < 0.05$, ns = not significant} \\
\end{tabular}
\end{table}

Binary maxASI ($7 \leq N \leq 85$, beyond divergence point): Standard $\rho$ is substantially inflated. Partial $\rho$ collapses to near zero for most algorithms (Sakamoto: $0.085$ ns; LZD: $-0.046$ ns), confirming that apparent rank preservation is largely an artefact of length covariation.

Ternary maxASI ($13 \leq N \leq 54$): Intermediate regime. The smaller range means a larger proportion lies below the ternary divergence point; strings exceeding it still contribute length covariation. Both mechanisms operate simultaneously, and partial $\rho$ remains non-significant for most algorithms.

Quaternary maxASI ($21 \leq N \leq 49$): Pre-divergence mechanism dominates. RP-FIXED and T-NDR achieve $\rho_{\text{partial}} = 1.000$ ($p < 0.001$), but this reflects exact reproduction ($0.00\%$ error, Table~\ref{tab:alphabet_effect}) rather than improvement in heuristic performance. Longer quaternary strings, were their ASI computable, would reveal the same divergence as binary.

These findings require reinterpretation of the high standard $\rho$ on all maxASI sets. A fully controlled comparison across alphabet sizes would require identical $N$-ranges; this was computationally infeasible.

For random sets, residual covariation between $N$ and ASI is weaker than in maxASI sets but not negligible: $\rho(\text{ASI}, N)$ rises from $\approx 0$ (binary) to moderate (ternary) to substantial (quaternary). For the Re-Pair family, this has negligible practical effect---partial $\rho$ remains highly significant. For weaker heuristics (Sakamoto, SOLCA, Bisection), a genuine length-driven component persists, reducing several partial correlations to non-significance.

The most informative evidence for length-independent complexity discrimination comes from fixed-length sets, where $N$ carries no variance and standard $\rho$ equals partial $\rho$ exactly. On these sets, T-NDR achieves $\rho \geq 0.916$ (binary), $\rho \geq 0.987$ (ternary), and $\rho \geq 0.987$ (quaternary).

\subsection{Statistical Validation: Wilcoxon Signed-Rank Tests}

To determine whether differences in mean relative error are statistically significant, we performed pairwise Wilcoxon signed-rank tests on ten set-level mean errors ($m=10$ paired observations, one per experimental set).

Per-algorithm mean errors (Table~\ref{tab:wilcoxon}) are unweighted averages of the ten set-level means, assigning equal weight to each structural regime. This is appropriate because each set represents a distinct structural regime, and the non-normal distribution of errors paired with identical benchmarking justifies the non-parametric approach.

With $m=10$, statistical power is limited; nonetheless, the magnitude of differences is sufficiently large to yield significant results.

Tie handling follows standard convention: pairs with identical set-level mean errors are excluded. On the unary set, Sakamoto, SOLCA, and Sequitur tie with T-NDR ($4.16\%$), reducing effective sample size to $m_{\text{eff}} = 9$ for those comparisons. Bisection and LZD do not tie ($19.15\%$ and $91.39\%$), so $m_{\text{eff}} = 10$. Reported $p$-values are consistent: Sakamoto and SOLCA reach $p = 0.0039 = 2/2^9$ (minimum two-sided $p$ with $m=9$, all nine non-tied differences favour T-NDR); Bisection and LZD reach $p = 0.0020 = 2/2^{10}$ (minimum with $m=10$). Sequitur also has $m_{\text{eff}} = 9$ but $p = 0.0117$ exceeds the minimum, reflecting that not all differences uniformly favour T-NDR---consistent with Sequitur's substantially lower mean error ($7.81\%$) compared to other baselines.

$p$-values were adjusted using Holm-Bonferroni step-down correction across five baseline algorithms. Rank-biserial correlation $r$ serves as effect size.

Aggregating to the set level discards within-set pairing and most statistical resolution; a per-string paired analysis with set membership as blocking factor would be more informative but requires mixed-model machinery beyond this study. The large effect sizes ($r = 1.00$ against Sakamoto, SOLCA, Bisection, and LZD; $r = 0.91$ against Sequitur) suggest qualitative conclusions are robust.

RP-FIFO and RP-FIXED are excluded from this testing framework, because T-NDR exhibits
one-sided dominance over both baselines, empirically confirmed on every tested string
(Observation~\ref{obs:tndr_empirical}). A two-sided Wilcoxon signed-rank test is therefore
methodologically inappropriate.

Excluding the 83 unary strings where all Re-Pair implementations yield identical lengths, across the remaining 325 strings T-NDR achieved strictly shorter grammars than both FIFO and FIXED in 31 instances (improvements of 1--3 steps relative to $\min(\text{FIFO}, \text{FIXED})$)---a tighter bound in $9.54\%$ of relevant cases (Observation~\ref{obs:tndr_empirical}).

\begin{table}[h!]
\centering
\caption{Statistical significance and effect sizes vs.\ Re-Pair T-NDR across all tested sets (excluding RP-FIFO and RP-FIXED).}
\label{tab:wilcoxon}
\small
\begin{tabular}{|l|c|c|c|c|}
\hline
\textbf{Algorithm} & \textbf{Mean Error} & \textbf{$p$-value} 
  & \textbf{$p_{\text{Holm}}$} & \textbf{Effect size $r$} \\
\hline
Sakamoto          & 22.28\% & 0.0039 & 0.0117 & 1.00 (Large) \\
SOLCA             & 33.08\% & 0.0039 & 0.0117 & 1.00 (Large) \\
Bisection         & 32.44\% & 0.0020 & 0.0100 & 1.00 (Large) \\
Sequitur          &  7.81\% & 0.0117 & 0.0117 & 0.91 (Large) \\
\textbf{RP-T-NDR}  & \textbf{5.45\%} & --- & --- & --- \\
LZD               & 35.22\% & 0.0020 & 0.0100 & 1.00 (Large) \\
\hline
\multicolumn{5}{l}{\footnotesize $p_{\text{Holm}}$: Holm--Bonferroni adjusted (5 comparisons).} \\
\multicolumn{5}{l}{\footnotesize $m_{\text{eff}}$: 9 for Sakamoto, SOLCA, Sequitur (tied on unary); 10 for Bisection, LZD.} \\
\multicolumn{5}{l}{\footnotesize $r = (W^{+} - W^{-})/(W^{+} + W^{-})$. For Sakamoto, SOLCA, Bisection, LZD: $W^{-} = 0 \Rightarrow r = 1.00$.} \\
\multicolumn{5}{l}{\footnotesize For Sequitur: $W^{-} = 2$ (ternary maxASI), $r = (43-2)/45 = 0.91$.} \\
\multicolumn{5}{l}{\footnotesize Mean errors: unweighted averages of ten set-level means.} \\
\end{tabular}
\end{table}

T-NDR outperforms all five baselines with large effect sizes. Against LZD, Sakamoto, Bisection, and SOLCA, all non-tied differences favour T-NDR ($W^{-} = 0$, $r = 1.00$). Against Sequitur ($p_{\text{Holm}} = 0.0117$), the sole set in Sequitur's favour is ternary maxASI (rank 2 among 9 absolute differences), giving $r = 0.91$.

\subsection{Pareto Frontier and Algorithm Selection}

\begin{figure}[htbp]
    \centering
    \includegraphics[width=0.85\linewidth]{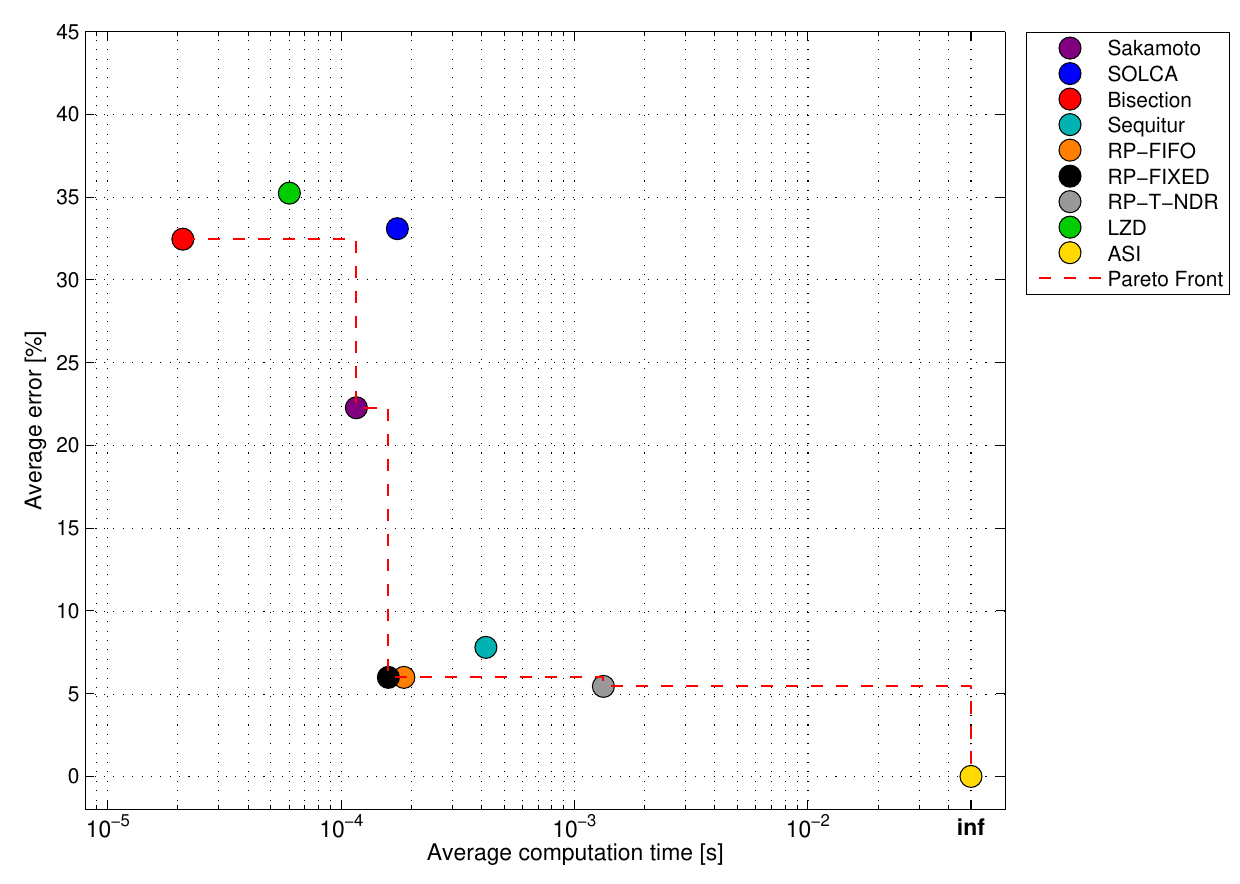}
    \caption{Pareto efficiency: average error vs.\ average computation time across ten synthetic sets. Dashed line: Pareto frontier. ASI finder defines the theoretical optimum.}
    \label{fig:pareto}
\end{figure}

Figure~\ref{fig:pareto} presents the Pareto frontier in average error--average computation time space, aggregated across ten synthetic sets. Per-string times were measured independently; set-level averages were then averaged across all ten sets. The biological set is excluded. The ASI finder is plotted at $t = \infty$, $\text{error} = 0$.

\begin{table}[htbp]
\centering
\begin{threeparttable}
\caption{Algorithm selection guidelines based on Pareto efficiency and string structure.}
\label{tab:final_guidelines}
\small
\begin{tabular}{|l|l|c|c|}
\hline
\textbf{String type} & \textbf{Algorithm} & \textbf{Mean error} & \textbf{$\rho$} \\
\hline
Unary ($|\Sigma|=1$) & Any $\setminus \{\text{Bisection, LZD}\}$ & $\sim$4.2\% & 0.951 \\
Binary maxASI & RP-T-NDR / Sequitur & 4.0--5.3\% & $\geq$0.996 \\
Ternary maxASI & Sequitur / RP-T-NDR & 2.2--2.5\% & $\geq$0.996 \\
Quaternary maxASI & RP-FIXED / T-NDR & 0.00\% & 1.000 \\
\hline
Random entropy (Bin.) & RP-T-NDR / Sequitur & $\sim$15.0\% & 0.90--0.93 \\
Random entropy (Ter.) & Re-Pair family & 8.97\% & 0.925 \\
Random entropy (Quat.) & Re-Pair family & $\sim$5.3\% & $\geq$0.92 \\
\hline
Fixed entropy (Bin.) & RP-T-NDR & 7.52\% & 0.916 \\
Fixed entropy (Ter.) & RP-T-NDR & 3.89\% & 0.987 \\
Fixed entropy (Quat.) & RP-T-NDR & 3.17\% & 0.987 \\
\hline
\end{tabular}
\begin{tablenotes}
\item Note: $\rho$ values are standard Spearman coefficients. For maxASI sets, these are partly inflated; see Table~\ref{tab:partial_spearman} for partial correlations.
\end{tablenotes}
\end{threeparttable}
\end{table}

Re-Pair family dominates across all ten sets, especially T-NDR on fixed-length sets (errors as low as $3.17\%$). Sequitur is the most versatile alternative (not on the Pareto frontier but errors below $10\%$ in 6 of 10 scenarios with minimal computational cost). For sub-microsecond processing, Bisection is the only viable heuristic ($\sim\!2 \times 10^{-5}$\,s), though its error approaches $60\%$ on random binary strings. LZD produces systematic overestimation that renders it unsuitable for AT applications despite reasonable $\rho$ in several sets.

\section{Discussion}\label{sec:Discussion}

\subsection{What the Results Establish}

The maxASI sets are bounded by ASI computation cost: $N \leq 85$ (binary), $N \leq 54$ (ternary), $N \leq 49$ (quaternary). Extending ternary/quaternary to match binary was computationally infeasible; truncating binary/ternary would discard valid results without lowering the quaternary ceiling. These sets therefore cannot be compared across alphabets at face value.

The divergence point (Definition~\ref{def:divergence_point}) and partial Spearman analysis (Table~\ref{tab:partial_spearman}) address this asymmetry directly. A shorter tested range mechanically increases the share of strings below that alphabet's divergence point, inflating rank correlation and exact-match rate regardless of heuristic quality. The ternary maxASI set ($N \leq 54$) is transitional: sufficient to expose divergence onset, but too constrained for conclusive assessment of rank fidelity. 
The quaternary maxASI set ($21 \leq N \leq 49$) is the extreme case: exact matches are consistent with the tested range lying below the quaternary divergence point (i.e., $N^*(\text{T-NDR},4) > 49$), though the sample does not establish this inequality.

The fixed-length sets ($N=30$) remove range asymmetry entirely. There, mean relative error decreases monotonically with $|\Sigma|$ for every algorithm---consistent with the tie-breaking mechanism identified for maxASI sets: greater alphabet diversity reduces digram tie frequency that greedy passes must resolve.

\subsection{Responding to the Introductory Questions}

The first question---whether state-of-the-art compressors offer sufficient heuristic efficacy despite NP-completeness---receives a qualified affirmative, led by Re-Pair T-NDR. This algorithm achieves low, single-digit mean errors across all alphabet sizes (except random binary), establishing superiority over the entire dataset (Table~\ref{tab:wilcoxon}), with the sole exception of ternary maxASI where Sequitur marginally outperforms it.

The practically relevant question is not whether heuristics work, but how far the divergence point $N^*$ extends and how error behaves beyond it. The quaternary maxASI set is uninformative: its formally significant partial correlations reflect only pre-divergence coverage, not genuine discrimination. The fixed-length random sets provide cleanest evidence: T-NDR achieves $\rho \geq 0.916$ (binary), $\rho \geq 0.987$ (ternary), $\rho \geq 0.987$ (quaternary) with no length covariation to remove.

The second question---how $|\Sigma|$ and symbol distribution balance determine accuracy---resolves along two axes. Larger alphabets mitigate greedy tie-breaking, as established above. Entropy acts as a separate driver: Sakamoto degrades toward balanced distributions; SOLCA and Bisection combine high absolute error with variable entropy sensitivity; Re-Pair family and Sequitur remain robust. Combined with the resource-accuracy trade-offs in Fig.~\ref{fig:pareto}, no single heuristic dominates. The practical answer is algorithm-to-use-case matching, formalised in Table~\ref{tab:final_guidelines}.

\subsection{Responding to AT Critics}

Ozelim et al.~\cite{ozelim_assembly_2025-1, ozelim_assembly_2026} claim that \textit{``assembly index is equivalent to LZ-style factorisation''}. This is not supported by structural relationships between AT and the LZ family.

At the theoretical level, a combinatorial separation exists for LZ78-derived schemes (including LZW~\cite{welch_technique_1984} and LZD). Table~\ref{tab:approx_ratio} records the lower bound $\Omega(n^{2/3}/{\log n})$ for the LZ78 family~\cite{charikar_smallest_2005, sakamoto_space-saving_2009}, meaning there are string families on which LZ78-style output exceeds the optimal grammar size $g^*$ by an unbounded factor. This rules out a universal identity between ASI and any LZ78-derived index. The asymptotic nature of this bound does not diminish its force against a claim of strict equivalence; moreover, a single finite counter-example suffices to refute identity, and two suffice to rule out any functional dependence.

The string $0^{64}$ has $\mathrm{ASI}=6$ (by repeated squaring, $0 \to 0^2 \to 0^4 \to \dots \to 0^{64}$, six concatenations) while LZD gives $L_{\mathrm{LZD}}=18$ (Table~\ref{Table:random_b2.pdf}), and the ternary minASI string $(012)^9$ also has $\mathrm{ASI}=6$ while LZD gives $L_{\mathrm{LZD}}=11$ (Table~\ref{Table:random_b3.pdf}). Furthermore, the ASI is direction-independent (invariant under reversal and symbol relabelling), whereas LZ-family algorithms are direction-dependent. This is demonstrated by the mirror pair $w = \texttt{1010201111101112}$ and $w' = \texttt{2111011111020101}$: $\mathrm{ASI}(w)=\mathrm{ASI}(w')=10$, yet $L_{\mathrm{LZD}}(w)=11$ and $L_{\mathrm{LZD}}(w')=12$, showing that LZD output depends on traversal direction. Equal $\mathrm{ASI}$ yields unequal $L_{\mathrm{LZD}}$ (at $\mathrm{ASI}=6$, $L_{\mathrm{LZD}}\in\{11,18\}$), and equal $L_{\mathrm{LZD}}$ yields unequal $\mathrm{ASI}$ (at $L_{\mathrm{LZD}}=11$, $\mathrm{ASI}\in\{6,10\}$), ruling out a functional relationship in either direction.

Ozelim et al.'s claim that \textit{``the index did not need any proof because it is trivially computable''} conflates decidability with tractability. The ASI is computable (exhaustive search exists) but computationally intractable (NP-complete). This is analogous to claiming the Travelling Salesman Problem is trivial because route enumeration exists. Computability establishes decidability; tractability establishes feasibility. The ASI satisfies the former but fails the latter.

Abrahão et al.~\cite{abrahao_assembly_2024} conclude that \textit{``AT and its assembly index represent a considerably constrained version of compression algorithms, and a loose upper bound of $K$''}. The first clause---that ASI is ``a loose upper bound of $K$''---is literally correct: every grammar size upper-bounds Kolmogorov complexity, and loosely so, since CFGs are weaker than Turing machines. This is unremarkable and does not diminish AT. The second clause---that AT is ``a considerably constrained version of compression algorithms''---does not follow from ASI $=$ SLP. If ASI $=$ SLP, then ASI \emph{is} the theoretical optimum of grammar-based compression, not a constrained approximation; no grammar-based method can produce a smaller grammar by definition.
Beyond our analysis, empirical evaluation of chemical samples \cite{tang_distributed_2026} confirms that AT (via the MA index) fundamentally diverges from the Lempel-Ziv (LZW) framework.

\subsection{Beyond SLP: What AT Enables}

Mathematical equivalence ASI $=$ SLP establishes that AT and grammar-based compression optimise the same objective function, but not that they ask identical questions. AT extends beyond the SGP framework in several directions.

First, AT identifies extremal complexity bounds---minASI and maxASI---that remain secondary in pure compression theory. For all unary and minASI binary strings, the lower bound is strictly determined by addition-chain combinatorics: $\text{ASI}(a^N)$ equals the length of the shortest addition chain for $N$. By probing what grammar-based substitution \emph{cannot} compress, AT transforms compression theory from a data-reduction tool into a framework for identifying maximally complex objects.

Second, AT relates ASI to complementary constructs absent from the SGP. In prior work~\cite{bieniawski_assembly_2026}, we established negative correlation between ASI and assembly depth (ASD)---the minimum pathway depth in ASI steps---which is positively correlated with the number of independent assembly steps. We also found a negative correlation between ASI and expected waiting time (EWT). These relationships have no analogue in grammar-based compression, which treats grammar size as an isolated optimisation target.

\section{Conclusions}\label{sec:Conclusions}

The ASI represents the output of an ideal, optimal compressor, defining the theoretical global minimum of the SGP.
The empirical results establish that GBCA provide reliable practical upper bounds on the ASI (where the ASI is computable), with the Re-Pair family --- and T-NDR in particular --- dominating the accuracy end of the trade-off.

The results presented here are bounded by the string lengths tractable for ASI computation and by linear string representations that do not model geometric, energetic, or synthesis-pathway constraints.

The precision of GBCA as ASI estimators established here has direct practical implications: in real-world applications, minimal grammar size differences may cross application-specific thresholds, justifying the tighter bound provided by Re-Pair T-NDR.

{\bf Funding---} This research received no external funding.

{\bf Data Availability Statement---} 
The implementations of the evaluated algorithms were obtained from the following public repositories:

\begin{itemize}
    \item Sakamoto and Bisection: \url{https://github.com/andrewwinslow/grammar-approx}
    \item SOLCA: \url{https://github.com/tkbtkysms/solca}
    \item Sequitur: \url{https://github.com/michaliskambi/grammar-compression}
    \item Re-Pair FIFO: \url{https://github.com/rwanwork/Re-Pair}
    \item LZD: \url{https://github.com/kg86/lzd}
\end{itemize}

Pseudocode for Re-Pair FIXED is given in Algorithm~\ref{alg:fixed_repair}; pseudocode for Re-Pair T-NDR is given in Algorithm~\ref{alg:t_ndr_final_fixed}.

Modified implementations of the evaluated algorithms written in Python are available at: 

\url{https://github.com/wbieniawski-123/ASI-vs-other-algos} (accessed on 23 June 2026)

C++ computational environment for general ASI computation is available at:

\url{https://github.com/szluk/AssemblyTheory} (accessed on 23 June 2026)

{\bf Acknowledgements---}  
The author thanks Szymon Łukaszyk and Piotr Masierak for their clarifications, formal corrections, and improvements.
We acknowledge Polish high-performance computing infrastructure PLGrid for awarding this project access to the LUMI supercomputer, owned by the EuroHPC Joint Undertaking, hosted by CSC (Finland) and the LUMI consortium through the pllatsl02 (19070) grant.

{\bf Conflicts of Interest---} 
No conflict of interest is declared by the author.

\appendix

\section*{Abbreviations}
The following abbreviations are used in this manuscript:

\begin{tabular}{ll}
AT                                       & assembly theory;\\
ASI                                      & assembly index;\\
$N$                                      & length of a string;\\
$|\Sigma|$                               & size of an alphabet;\\
SGP                                      & smallest grammar problem;\\
SLP                                      & straight-line program;\\
CFG                                      & context free grammar;\\
GBCA                                     & grammar-based compression algorithms;\\
LCA                                      & lowest common ancestor;\\
$w$                                      & string;\\
$H(w)$                                   & Shannon entropy;\\
\end{tabular}

\section*{Algorithms and Tables}

The following algorithms and tables are referenced in this manuscript:

\begin{algorithm}[H]
\setstretch{1.2}
\DontPrintSemicolon
\caption{Re-Pair FIXED}
\label{alg:fixed_repair}

\KwIn{string $w$}
\KwOut{Grammar size $L$, Rule set $\mathcal{R}$}

\BlankLine
$w_{list} \leftarrow \text{list of symbols from } w$ \;
$\mathcal{R} \leftarrow \emptyset$ \;
$r \leftarrow 0$ \;

\BlankLine
\tcp{Iterative greedy reduction}
\While{True}{
    $Pairs \leftarrow \text{count\_adjacent\_pairs}(w_{list})$ \;
    
    \If{$Pairs = \emptyset$ \textbf{or} $\max \{ Pairs[p] : p \in Pairs \} < 2$}{
        \textbf{break} \tcp*{Stop when no pair repeats}
    }

    \BlankLine
    \tcp{Step 1: Find maximum frequency}
    $f_{max} \leftarrow \max \{ Pairs[p] : p \in Pairs \}$ \;
    $Candidates \leftarrow \{ p \in Pairs : Pairs[p] = f_{max} \}$ \;

    \BlankLine
    \tcp{Step 2: Deterministic Tie-breaking (Lexicographical)}
    Sort $Candidates$ lexicographically \;
    $pair_{best} \leftarrow Candidates[0]$ \;

    \BlankLine
    \tcp{Step 3: Rule Creation and Substitution}
    $\mathcal{R} \leftarrow \mathcal{R} \cup \{ R_r \rightarrow pair_{best} \}$ \;
    $w_{list} \leftarrow \text{replace\_non\_overlapping}(w_{list}, pair_{best}, R_r)$ \;
    $r \leftarrow r + 1$ \;
}

\BlankLine
\tcp{Final Complexity Estimation (L)}
$N_{rules} \leftarrow \text{length}(\mathcal{R})$ \;
$N_{w} \leftarrow \max(0, \text{length}(w_{list}) - 1)$ \;
$L \leftarrow N_{rules} + N_{w}$ \;

\Return $L, \mathcal{R}$ \;
\end{algorithm}

\begin{algorithm}[H]
\setstretch{1.2}
\DontPrintSemicolon
\caption{Re-Pair T-NDR}
\label{alg:t_ndr_final_fixed}

\KwIn{string $w$, current rule count $r$}
\KwOut{Minimal grammar size $L_{min}$}

\BlankLine
\tcp{Global shared state for branch-and-bound pruning}
\textbf{Global Variables:} $G_{best} \leftarrow \infty$, $Cache \leftarrow \emptyset$ \;

\SetKwFunction{FTNDR}{T-NDR-Recursive}
\SetKwProg{Fn}{Function}{:}{}

\Fn{\FTNDR{$w, r$}}{
    $N \leftarrow \text{length}(w)$ \;
    $S_{curr} \leftarrow r + (N - 1)$ \tcp*{Current grammar size if we stop here}
    $S_{lower} \leftarrow r + \lceil \log_2 N \rceil$ \tcp*{True mathematical lower bound}

    \BlankLine
    \tcp{Step 1: Normalisation and Memoisation}
    $w_{norm} \leftarrow$ \textit{canonical\_form}($w$) \;
    \If{$w_{norm} \in Cache$}{
        \Return $r + Cache[w_{norm}]$ \;
    }

    \tcp{Early pruning using the true lower bound}
    \If{$S_{lower} \geq G_{best}$}{
        \Return $\infty$ \;
    }

    \BlankLine
    \tcp{Step 2: Base Case Analysis}
    $Pairs \leftarrow$ \textit{count\_adjacent\_pairs}($w$) \;
    \If{$Pairs = \emptyset$ \textbf{or} $\max \{ Pairs[p] : p \in Pairs \} < 2$}{
        $G_{best} \leftarrow \min(G_{best}, S_{curr})$ \;
        \Return $S_{curr}$ \;
    }

    \BlankLine
    \tcp{Step 3: Branching with Dynamic Pruning}
    $min\_L \leftarrow \infty$ \;
    $f_{max} \leftarrow \max \{ Pairs[p] : p \in Pairs \}$ \;
    $Candidates \leftarrow \{p \in Pairs : Pairs[p] = f_{max} \}$ \;
    
    \For{\textbf{each} $pair \in Candidates$}{
        $w_{next} \leftarrow$ \textit{replace\_non\_overlapping}($w, pair, R_r$) \;
        $res \leftarrow$ \FTNDR{$w_{next}, r + 1$} \;
        
        \BlankLine
        \tcp{Update both local minimum and global bound immediately}
        $min\_L \leftarrow \min(min\_L, res)$ \;
        $G_{best} \leftarrow \min(G_{best}, res)$ \;
    }

    \BlankLine
    \tcp{Step 4: Cache Residual Cost}
    $Cache[w_{norm}] \leftarrow min\_L - r$ \;
    \Return $min\_L$ \;
}
\end{algorithm}

\begin{table}[H]
  \centering
  \includegraphics[width=\linewidth]{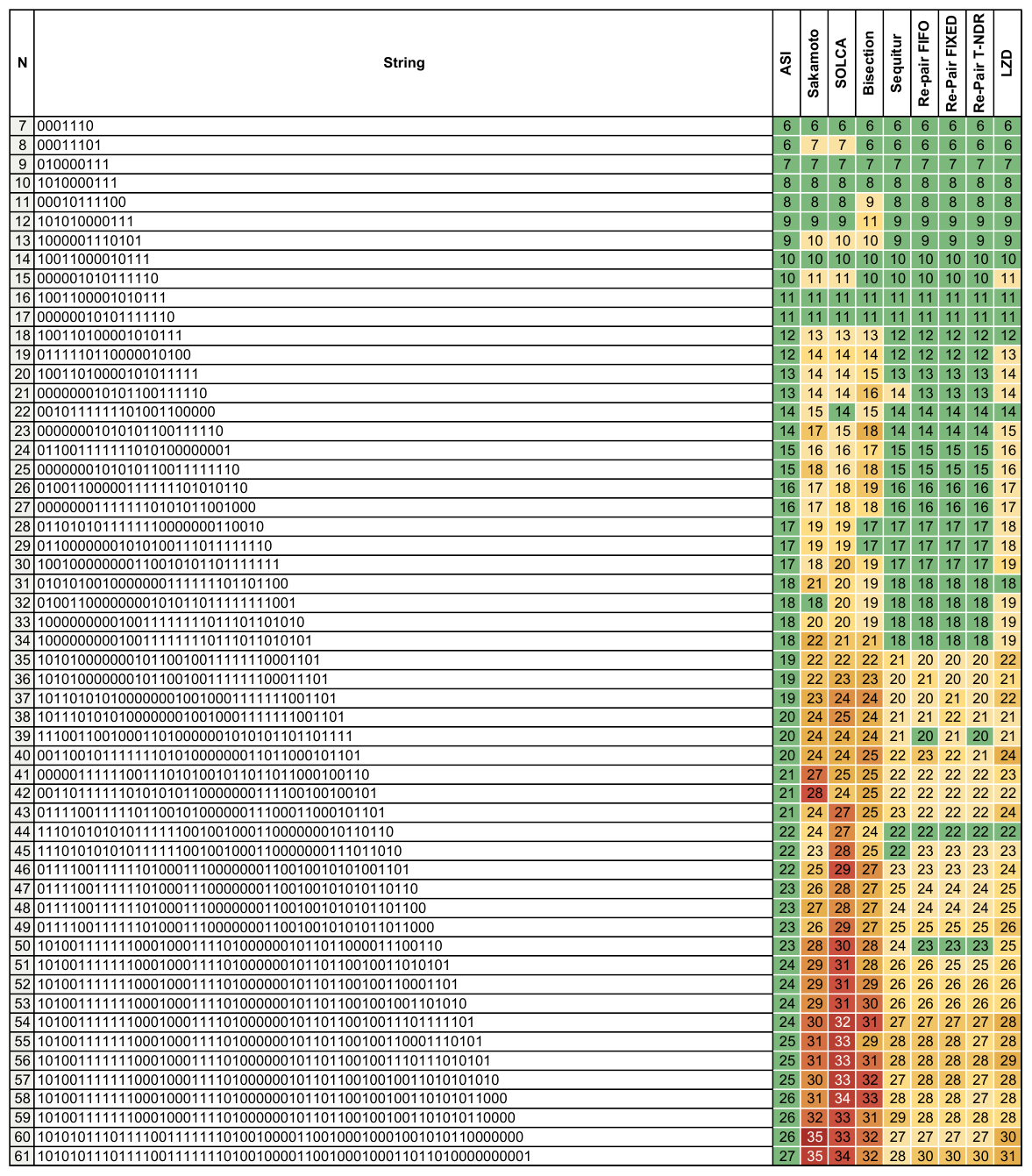}
  \captionsetup{name=Table}
  \caption{Binary maxASI strings ($\Sigma = 2$) for $7 \le N \le 60$.}
  \label{Table:maxbinarya}
\end{table}

\begin{table}[H]
  \centering
  \includegraphics[width=\linewidth]{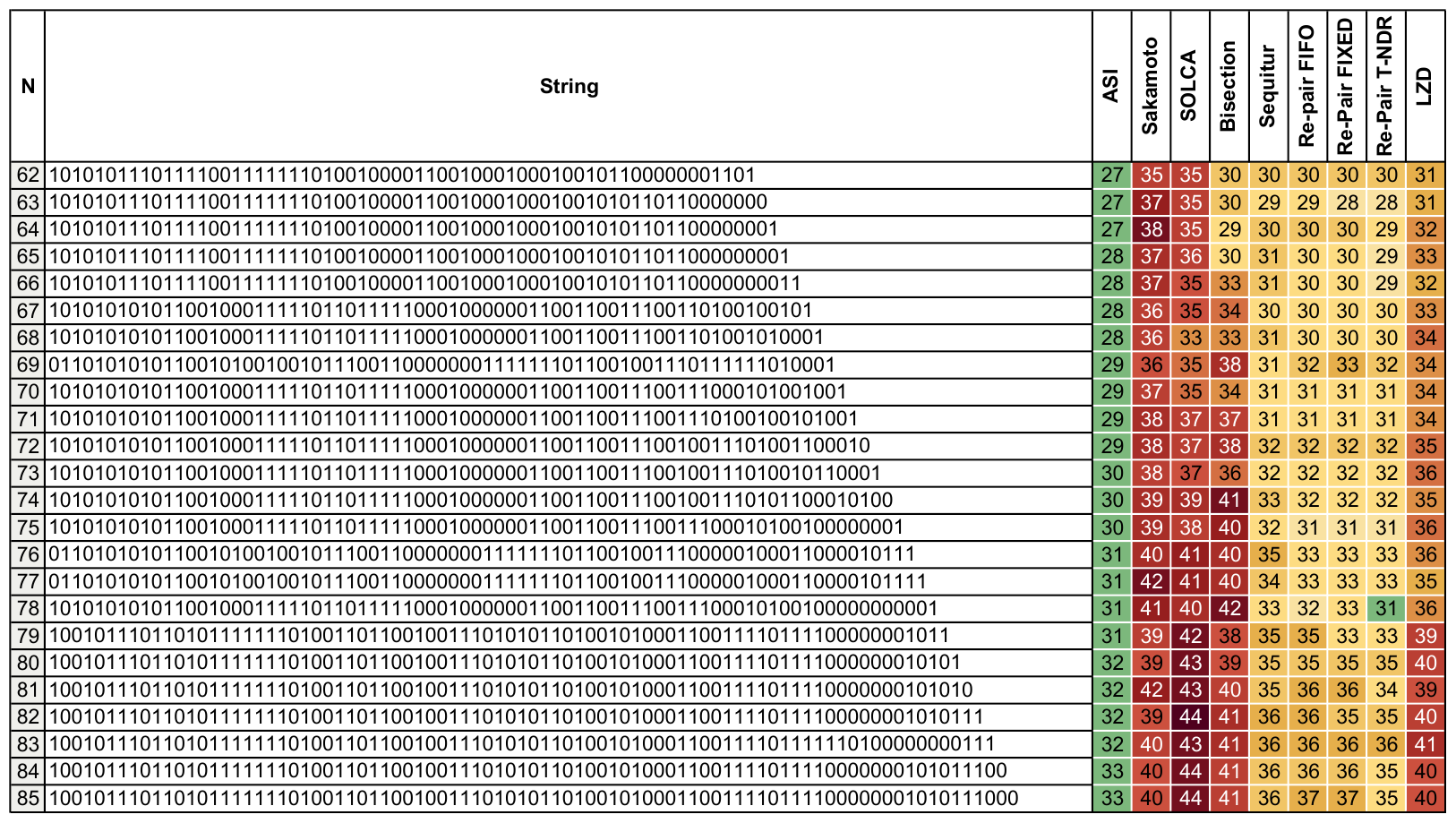}
  \captionsetup{name=Table}
  \caption{Binary maxASI strings ($\Sigma = 2$) for $61 \le N \le 85$.}
  \label{Table:maxbinaryb}
\end{table}

\begin{table}[H]
  \centering
  \includegraphics[width=\linewidth]{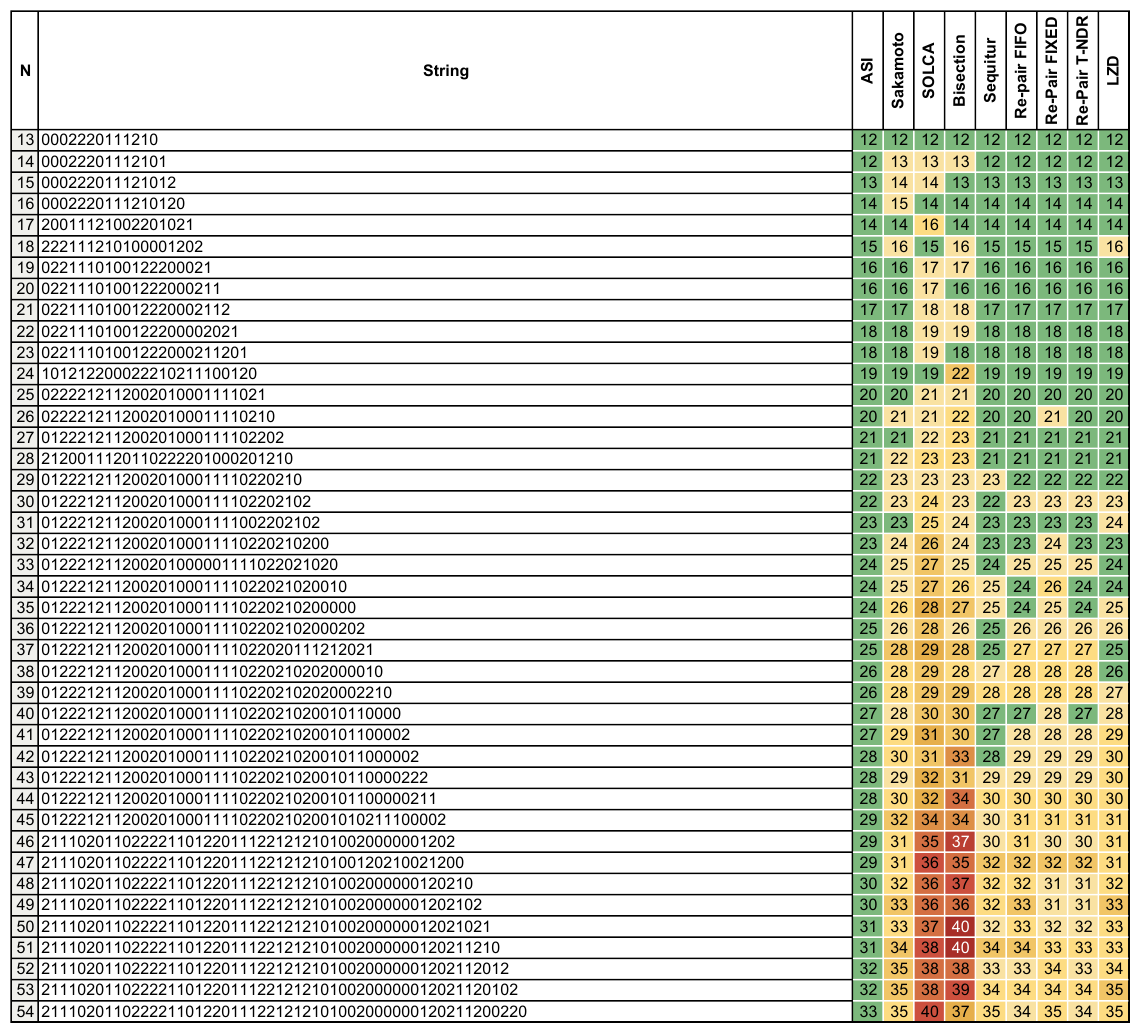}
  \captionsetup{name=Table}
  \caption{Ternary maxASI strings ($\Sigma = 3$) for $13 \le N \le 54$.}
  \label{Table:maxternary}
\end{table}

\begin{table}[H]
  \centering
  \includegraphics[width=\linewidth]{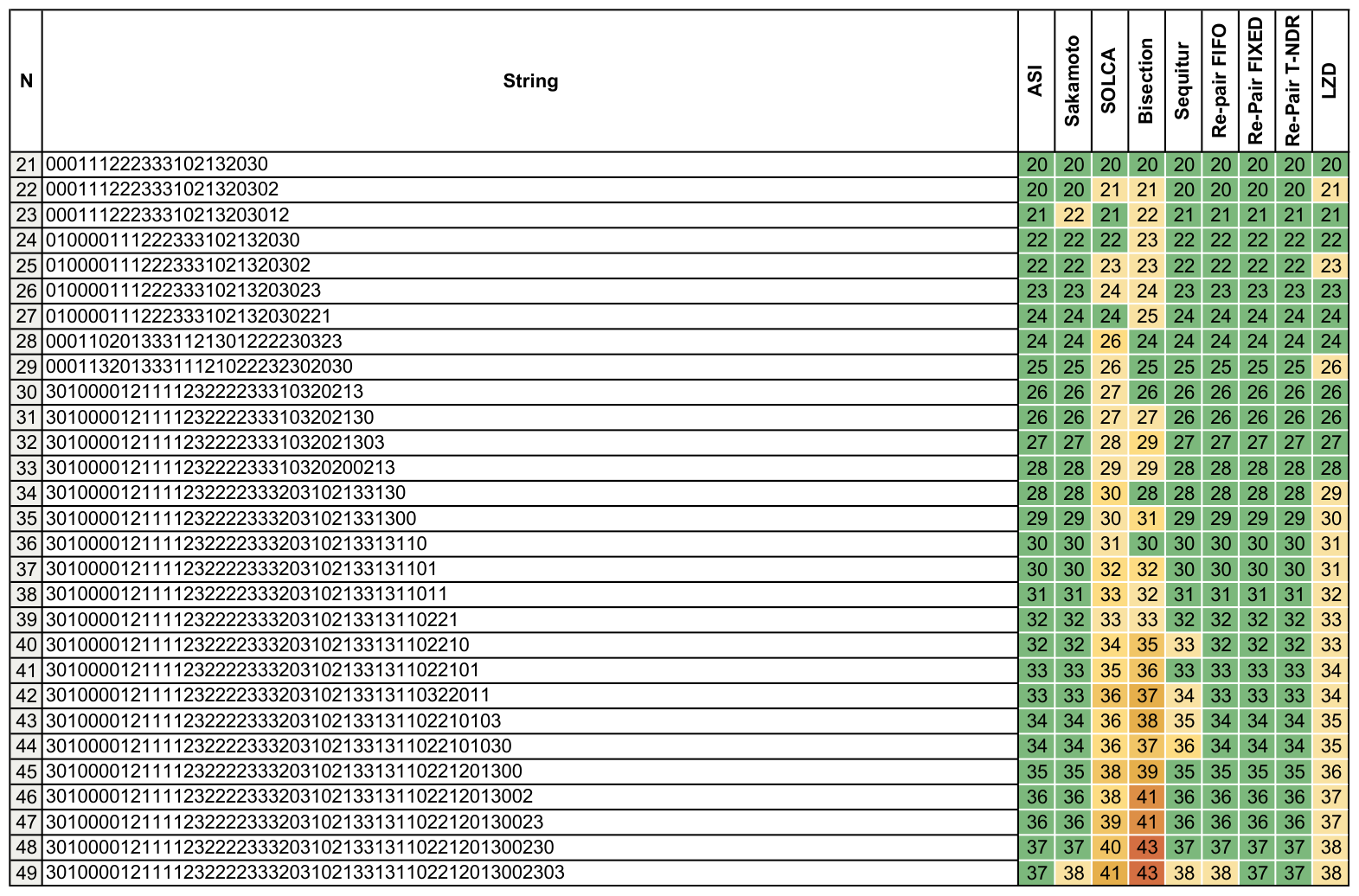}
  \captionsetup{name=Table}
  \caption{Quaternary maxASI strings ($\Sigma = 4$) for $21 \le N \le 49$.}
  \label{Table:maxquaternary}
\end{table}

\begin{table}[H]
  \centering
  \includegraphics[width=\linewidth]{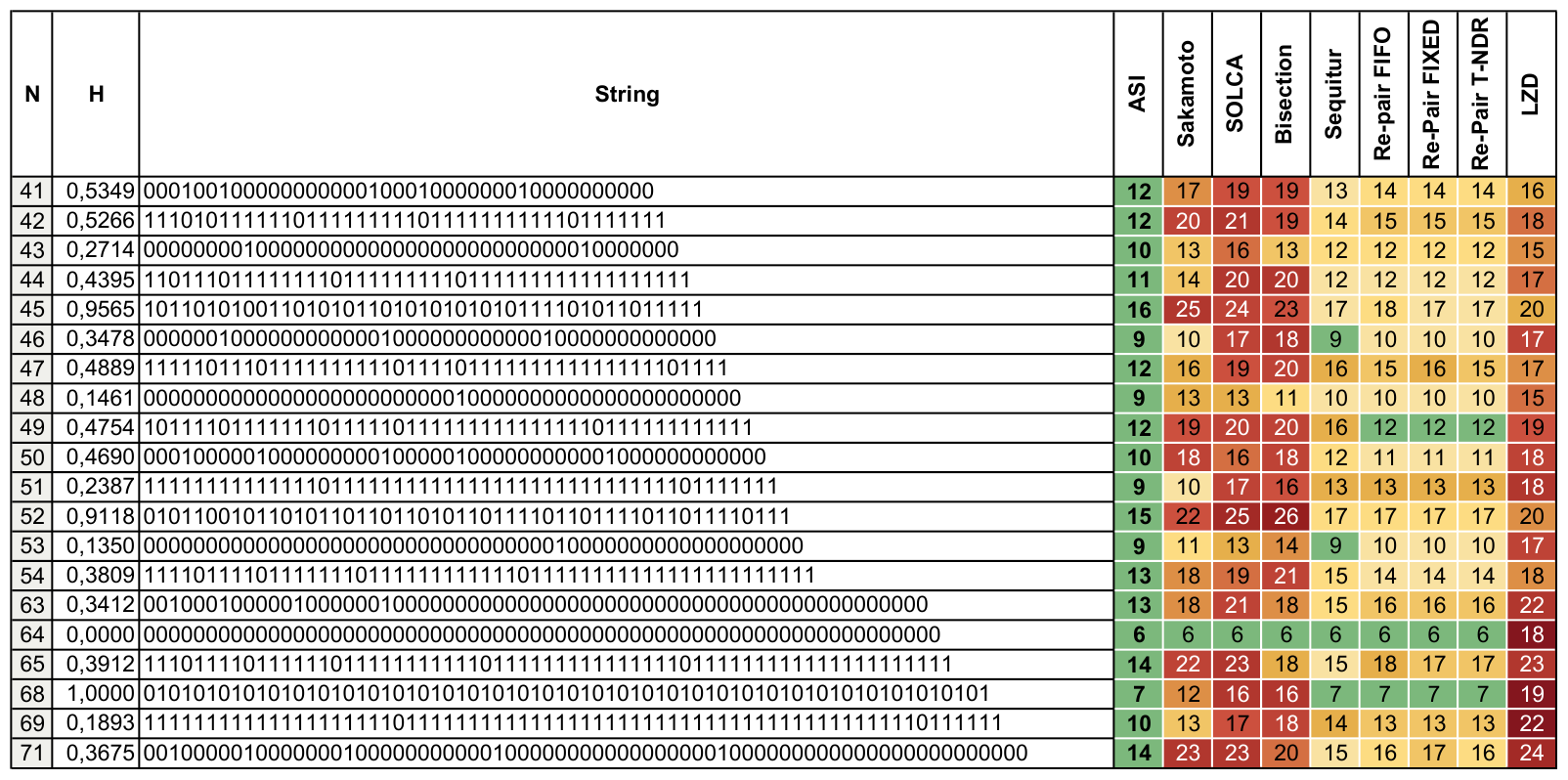}
  \captionsetup{name=Table}
  \caption{Random binary strings.}
  \label{Table:random_b2.pdf}
\end{table}

\begin{table}[H]
  \centering
  \includegraphics[width=\linewidth]{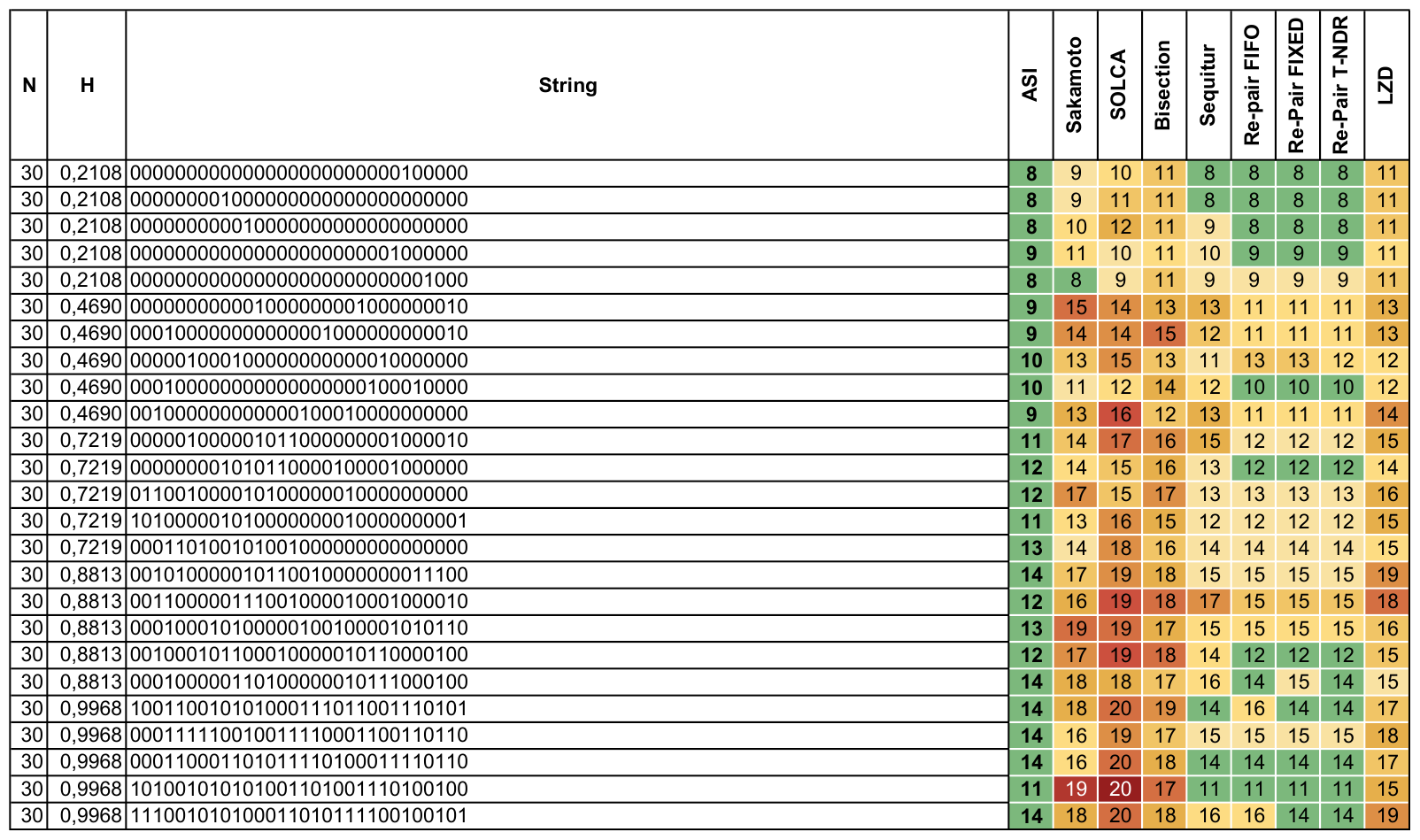}
  \captionsetup{name=Table}
  \caption{Random fixed-length binary strings.}
  \label{Table:fixed_b2.pdf}
\end{table}

\begin{table}[H]
  \centering
  \includegraphics[width=\linewidth]{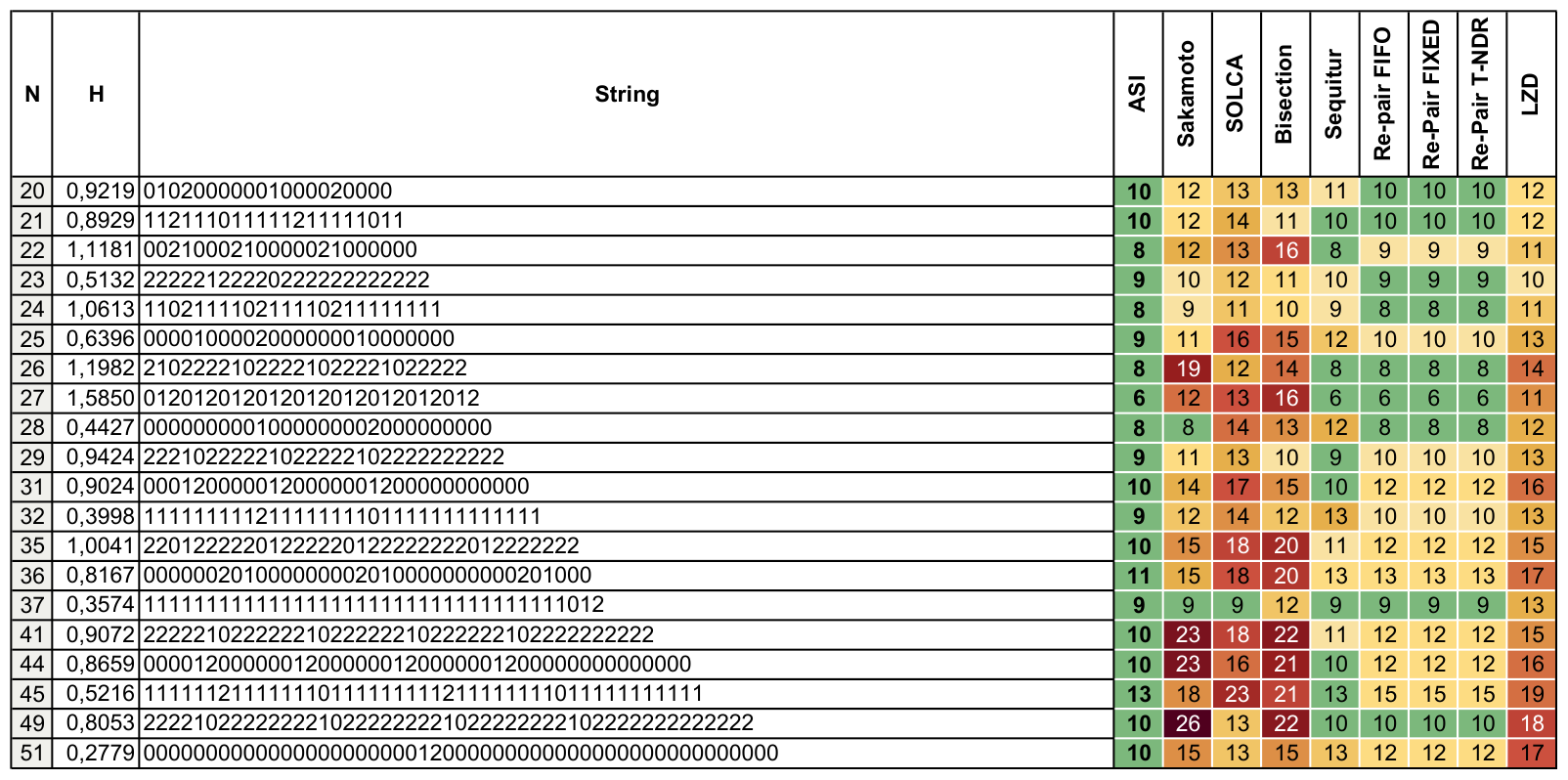}
  \captionsetup{name=Table}
  \caption{Random ternary strings.}
  \label{Table:random_b3.pdf}
\end{table}

\begin{table}[H]
  \centering
  \includegraphics[width=\linewidth]{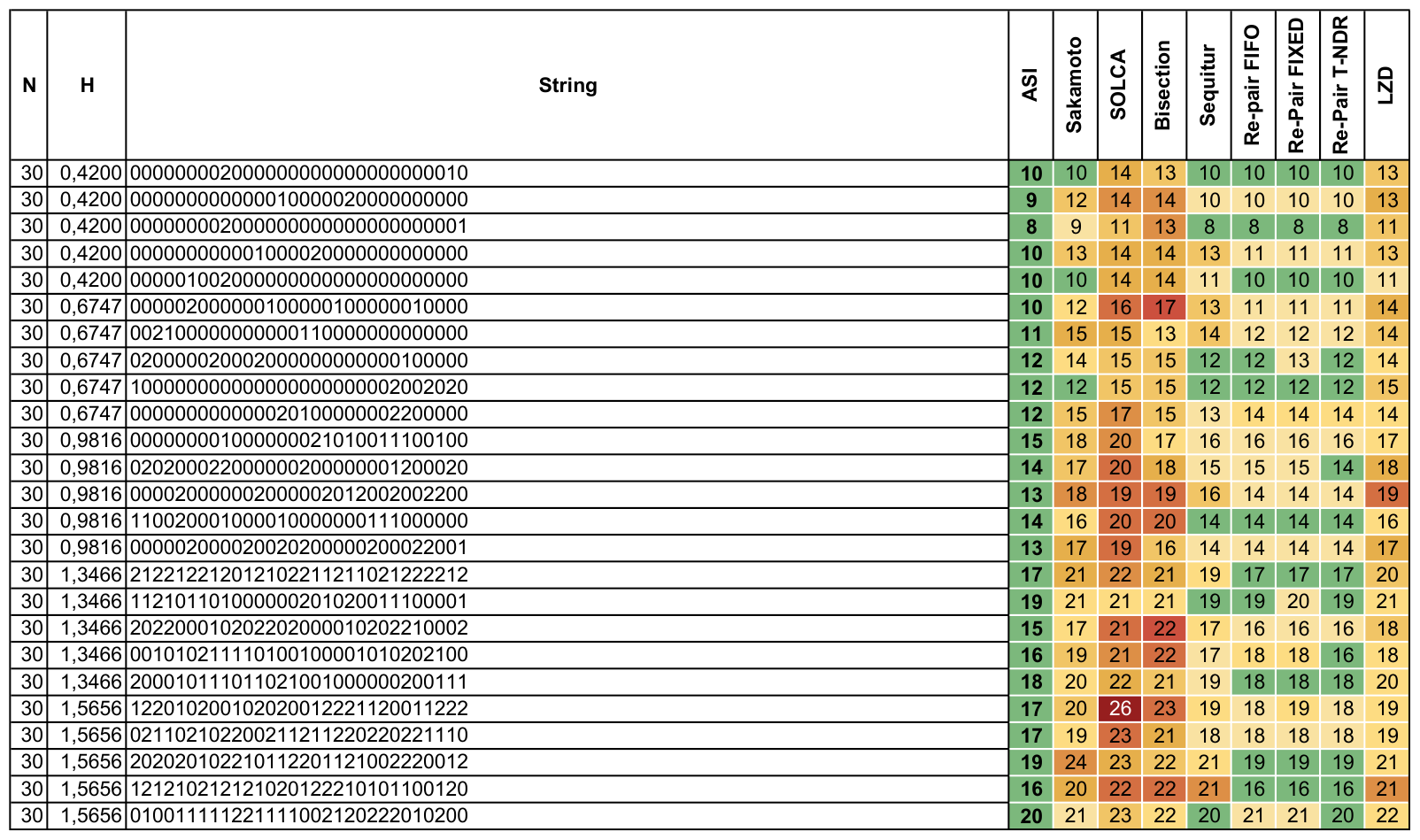}
  \captionsetup{name=Table}
  \caption{Random fixed-length ternary strings.}
  \label{Table:fixed_b3.pdf}
\end{table}

\begin{table}[H]
  \centering
  \includegraphics[width=\linewidth]{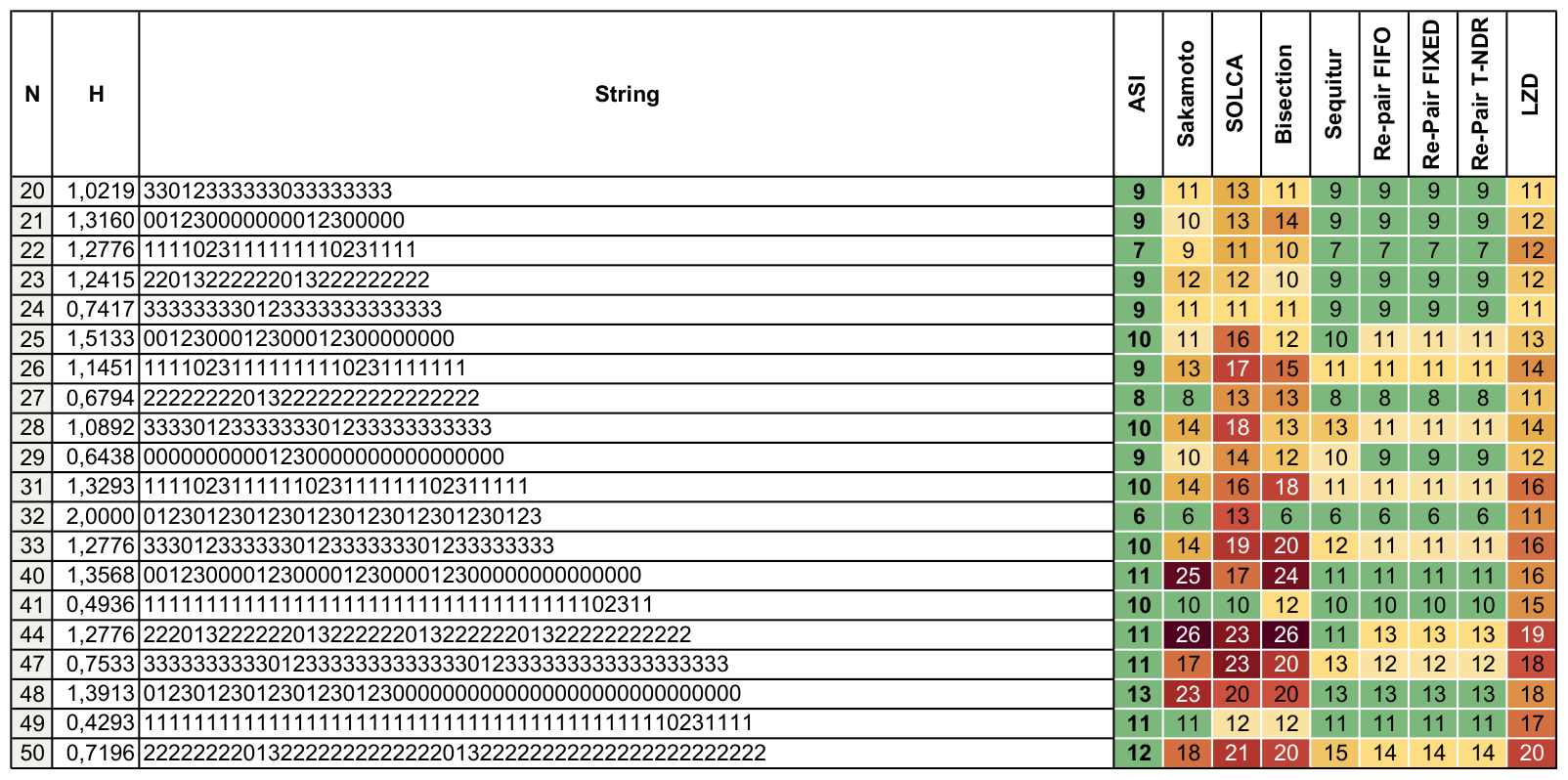}
  \captionsetup{name=Table}
  \caption{Random quaternary strings.}
  \label{Table:random_b4.pdf}
\end{table}

\begin{table}[H]
  \centering
  \includegraphics[width=\linewidth]{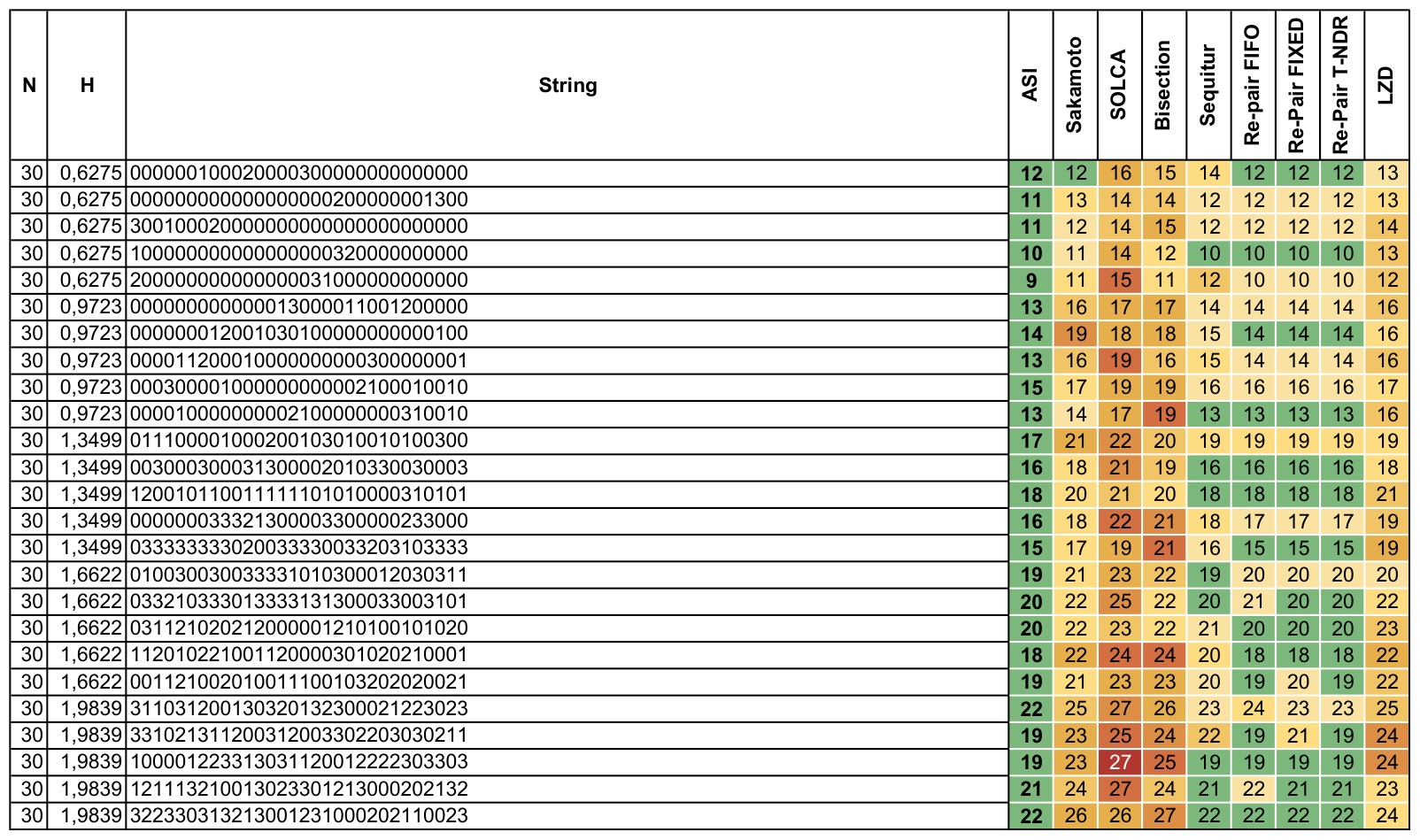}
  \captionsetup{name=Table}
  \caption{Random fixed-length quaternary strings.}
  \label{Table:fixed_b4.pdf}
\end{table}

\begin{table}[H]
\centering
\caption{Compression results for biological sequences: Proteins ($|\Sigma| = 18-20$) and Nucleic Acids ($|\Sigma| = 4$). Values marked 'n.a.' exceeded computational limits for the ASI calculation.}
\label{tab:biological_results_all}
\footnotesize 
\begin{tabular}{@{}lcccccccccccc@{}}
\toprule
\textbf{Accession ID} & $N$ & $|\Sigma|$ & \textbf{ASI} & \textbf{Sak.} & \textbf{SOL.} & \textbf{Bis.} & \textbf{Seq.} & \textbf{FIFO} & \textbf{FIX.} & \textbf{T-NDR} & \textbf{LZD} \\ \midrule
\multicolumn{12}{l}{\textit{Polypeptide Chains (Proteins)}} \\
1UBQ & 76 & 18 & 64 & 64 & 71 & 75 & 67 & 64 & 64 & 64 & 68 \\
1MUI & 99 & 20 & n.a. & 86 & 94 & 97 & 85 & 85 & 85 & 85 & 89 \\
1BNR & 110 & 18 & n.a. & 97 & 102 & 105 & 96 & 96 & 96 & 96 & 99 \\
2RNF & 120 & 20 & n.a. & 106 & 114 & 112 & 103 & 103 & 104 & 103 & 111 \\
1BP2 & 123 & 20 & n.a. & 107 & 116 & 115 & 107 & 106 & 106 & 105 & 109 \\
3EV6 & 124 & 19 & n.a. & 114 & 120 & 119 & 114 & 114 & 113 & 113 & 116 \\
3CHY & 128 & 18 & n.a. & 111 & 123 & 118 & 109 & 109 & 110 & 108 & 114 \\
4XJD & 129 & 20 & n.a. & 117 & 121 & 122 & 116 & 115 & 117 & 115 & 120 \\
2LIS & 136 & 19 & n.a. & 119 & 126 & 129 & 119 & 118 & 118 & 118 & 125 \\
2DN1 ($\alpha$) & 141 & 19 & n.a. & 120 & 132 & 131 & 120 & 119 & 119 & 119 & 125 \\
2DN1 ($\beta$) & 146 & 19 & n.a. & 131 & 138 & 136 & 125 & 126 & 125 & 125 & 129 \\
1MBA & 147 & 19 & n.a. & 126 & 135 & 133 & 125 & 125 & 124 & 123 & 127 \\
4L79 & 149 & 18 & n.a. & 123 & 133 & 139 & 123 & 121 & 121 & 120 & 125 \\
8T9Q & 149 & 18 & n.a. & 125 & 134 & 136 & 123 & 123 & 122 & 122 & 129 \\
8VQP & 154 & 19 & n.a. & 132 & 143 & 142 & 129 & 130 & 130 & 130 & 133 \\
1CPC (A,C) & 162 & 20 & n.a. & 137 & 151 & 154 & 135 & 136 & 136 & 135 & 143 \\
2LZM & 164 & 20 & n.a. & 142 & 158 & 155 & 143 & 140 & 141 & 138 & 144 \\
1CPC (B,D) & 172 & 18 & n.a. & 147 & 158 & 161 & 142 & 143 & 144 & 142 & 151 \\
8T4Q & 172 & 20 & n.a. & 152 & 167 & 167 & 147 & 147 & 147 & 147 & 154 \\
3GBN & 179 & 20 & n.a. & 158 & 170 & 174 & 156 & 154 & 154 & 153 & 164 \\ \midrule
\multicolumn{12}{l}{\textit{Nucleic Acid sequences (Genomes/RNA)}} \\
NR\_029492 & 71 & 4 & 36 & 43 & 49 & 49 & 42 & 40 & 40 & 37 & 44 \\
NR\_029820 & 80 & 4 & n.a. & 49 & 57 & 61 & 45 & 44 & 45 & 44 & 50 \\
NR\_031554 & 86 & 4 & n.a. & 55 & 61 & 67 & 51 & 50 & 51 & 50 & 58 \\
NR\_003686 & 96 & 4 & n.a. & 57 & 65 & 61 & 56 & 53 & 53 & 53 & 58 \\
NR\_029778 & 97 & 4 & n.a. & 59 & 68 & 68 & 55 & 57 & 57 & 55 & 63 \\
NR\_029700 & 99 & 4 & n.a. & 57 & 70 & 73 & 54 & 55 & 55 & 54 & 57 \\
NR\_029701 & 99 & 4 & n.a. & 64 & 67 & 73 & 56 & 55 & 54 & 53 & 62 \\
NR\_031970 & 100 & 4 & n.a. & 65 & 67 & 75 & 60 & 58 & 58 & 57 & 61 \\
NR\_029676 & 102 & 4 & n.a. & 59 & 70 & 73 & 61 & 58 & 58 & 58 & 65 \\
NR\_029610 & 110 & 4 & n.a. & 66 & 73 & 75 & 63 & 59 & 61 & 59 & 65 \\
NR\_029609 & 110 & 4 & n.a. & 70 & 73 & 75 & 63 & 62 & 64 & 62 & 68 \\
NR\_033048 & 110 & 4 & n.a. & 66 & 74 & 76 & 65 & 65 & 64 & 63 & 68 \\
NR\_029856 & 111 & 4 & n.a. & 67 & 75 & 77 & 64 & 65 & 64 & 61 & 71 \\
NR\_031615 & 112 & 4 & n.a. & 68 & 75 & 74 & 59 & 59 & 58 & 58 & 71 \\
M10816 & 117 & 4 & n.a. & 71 & 78 & 73 & 67 & 65 & 67 & 65 & 73 \\
NR\_024244 & 119 & 4 & n.a. & 65 & 77 & 73 & 67 & 64 & 64 & 64 & 74 \\
V00590 & 161 & 4 & n.a. & 92 & 103 & 102 & 90 & 88 & 88 & 88 & 92 \\
NR\_004430 & 164 & 4 & n.a. & 94 & 104 & 108 & 91 & 89 & 90 & 89 & 93 \\
X01042 & 166 & 4 & n.a. & 95 & 105 & 103 & 91 & 86 & 87 & 86 & 92 \\
NR\_002716 & 187 & 4 & n.a. & 105 & 116 & 120 & 102 & 103 & 105 & 101 & 111 \\ \bottomrule
\end{tabular}
\end{table}

% The \nocite command causes all entries in a bibliography to be printed out whether or not they are actually referenced in the text. This is appropriate for the sample file to show the different styles of references, but authors most likely will not want to use it.
%\nocite{*}

%\bibliographystyle{vancouver}
\bibliography{apssamp}

\end{document}